\newif\ifusenix
\usenixfalse          %% ← ACM mode (default)
\makeatletter
\def\input@path{{templates/acmart/}{templates/usenix/}}
\makeatother
\PassOptionsToPackage{table,xcdraw,x11names}{xcolor}

\ifusenix
  \documentclass[letterpaper,twocolumn,10pt]{article}
  \usepackage{templates/usenix/usenix}
\else
  \documentclass[sigplan,twocolumn]{templates/acmart/acmart}
\fi

\usepackage{confswitch}

\usepackage[markup=underlined,commandnameprefix=always]{changes}

\usepackage{letltxmacro}
\makeatletter
\NewDocumentCommand\ulsafe@cite{o o m}{%
  \IfNoValueTF{#1}%
    {\mbox{\ulsafe@origcite{#3}}}%
    {\IfNoValueTF{#2}%
      {\mbox{\ulsafe@origcite[#1]{#3}}}%
      {\mbox{\ulsafe@origcite[#1][#2]{#3}}}}%
}
\newcommand\ulsafe@shield{%
  \LetLtxMacro\ulsafe@origcite\cite
  \LetLtxMacro\cite\ulsafe@cite
}
\setaddedmarkup{{\ulsafe@shield\uline{#1}}}
\setdeletedmarkup{{\ulsafe@shield\sout{#1}}}
\sethighlightmarkup{{\ulsafe@shield\uwave{#1}}}
\makeatother

\acmYear{2026}\copyrightyear{2026}
\setcopyright{cc}
\setcctype[4.0]{by}
\acmConference[SOSP '26]{Symposium on Operating Systems Principles}{September 29--October 2, 2026}{Prague, Czech Republic}
\acmBooktitle{Symposium on Operating Systems Principles (SOSP '26), September 29--October 2, 2026, Prague, Czech Republic}
\acmDOI{10.1145/3830418.3843853}
\acmISBN{979-8-4007-2585-2/26/09}

\ifusenix\else
  \makeatletter
  \patchcmd{\maketitle}%
    {\acmConference@shortname, \acmConference@venue}%
    {\acmConference@shortname, \acmConference@date, \acmConference@venue}%
    {}{\ClassWarning{acmart}{Could not patch copyright block to include date}}
  \makeatother
\fi

\usepackage{tikz}
\usetikzlibrary{positioning, shapes.geometric, arrows.meta, fit, backgrounds, calc, decorations.pathmorphing}
\usepackage{amsmath}

\ifusenix
  \usepackage{amssymb,amsmath,amsfonts,amsthm}
  \usepackage[T1]{fontenc}
  \usepackage[utf8]{inputenc}
  \usepackage{microtype}
  \UseMicrotypeSet[protrusion]{basicmath}
  \usepackage{hyperref}
  \usepackage{graphicx}
  \usepackage{xcolor}
  \usepackage{booktabs}
  \usepackage{etoolbox}
  \usepackage{comment}
  \usepackage{caption}
\fi

\usepackage{listings}

\usepackage{upquote}
\usepackage{grffile}
\usepackage[normalem]{ulem}
\usepackage{multirow}
\usepackage{xspace}
\usepackage{tabularx}
\usepackage{ragged2e}
\usepackage{paralist}

\ifusenix
  \usepackage[american]{babel}
\fi

\ifusenix
  \usepackage{color}
\fi
\usepackage{wrapfig}
\usepackage{balance}
\usepackage{enumitem}
\usepackage{url}
\usepackage{pifont}
\usepackage{mathtools}

\ifusenix
  \usepackage{xkeyval}
\fi

\usepackage{threeparttable}
\usepackage{makecell}

\usepackage[tworuled, vlined]{algorithm2e}
\usepackage{sepfootnotes}

\ifusenix
  \usepackage[compact]{titlesec}
\fi

\usepackage[capitalize]{cleveref}
\crefformat{section}{§#2#1#3}
\usepackage{textcomp}
\usepackage{subcaption}

\usepackage[most]{tcolorbox}
\usepackage{dblfloatfix}

\PassOptionsToPackage{usenames,dvipsnames}{color} % color is loaded by hyperref
\hypersetup{unicode=true,
    colorlinks=true,
    linkcolor=blue,
    citecolor=blue,
    anchorcolor=blue,
    urlcolor=blue,
    breaklinks=true}
\setlist{topsep=1pt, partopsep=0pt, itemsep=1pt, parsep=1pt, leftmargin=10pt}

\makeatletter
\def\maxwidth{\ifdim\Gin@nat@width>\linewidth\linewidth\else\Gin@nat@width\fi}
\def\maxheight{\ifdim\Gin@nat@height>\textheight\textheight\else\Gin@nat@height\fi}
\makeatother
\setkeys{Gin}{width=\maxwidth,height=\maxheight,keepaspectratio}
\makeatletter
\g@addto@macro{\UrlBreaks}{\UrlOrds}
\makeatother

\newcommand\paraspace{\vspace*{0.4ex}}
\providecommand\parab[1]{\paraspace\noindent\textbf{#1}}
\providecommand\parae[1]{\paraspace\noindent\textbf{\textit{#1}}}

\newtoggle{reviewmode}
\newtoggle{anony}

\newcommand{\projurl}[1]{%
    \iftoggle{anony}{%
        URL is hidden for review purpose%
    }{%
        \url{#1}%
    }%
}

\newcommand{\sysname}{\textsc{DDB}\xspace}

\newtheorem{lemma}{Lemma}

\newtheorem{remark}{Remark}
\newcommand{\etc}{\emph{etc.}\xspace}
\newcommand{\ie}{\emph{i.e.,}\xspace}
\newcommand{\eg}{\emph{e.g.,}\xspace}

\newcommand{\secref}[1]{\S\ref{#1}}
\newcommand{\figref}[1]{Figure~\ref{#1}}
\newcommand{\tabref}[1]{Table~\ref{#1}}
\newcommand{\apdxref}[1]{Appendix~\secref{#1}}
\newcommand{\algoref}[1]{Algorithm~\ref{#1}}

\definecolor{teal}{rgb}{0.0, 0.5, 0.5}
\definecolor{olive}{rgb}{0.5, 0.5, 0.0}
\definecolor{pink}{rgb}{1.0, 0.75, 0.8}
\definecolor{lightgray}{gray}{0.75}
\definecolor{mediumgray}{gray}{0.5}
\definecolor{darkgray}{gray}{0.25}
\definecolor{charcoal}{gray}{0.2}
\definecolor{turquoise}{rgb}{0.25, 0.88, 0.82}
\definecolor{coral}{rgb}{1.0, 0.5, 0.31}
\definecolor{navyblue}{rgb}{0.0, 0.0, 0.5}
\definecolor{lime}{rgb}{0.75, 1.0, 0.0}
\definecolor{darkgreen}{rgb}{0.0, 0.5, 0.0}
\definecolor{violet}{rgb}{0.56, 0.0, 1.0}
\definecolor{lightgreen}{rgb}{0.85, 1.0, 0.85}

\definecolor{burgundy}{cmyk}{0.5, 1.0, 0.7, 0.4}
\definecolor{olivegreen}{cmyk}{0.64, 0, 0.95, 0.4}
\definecolor{peach}{cmyk}{0, 0.5, 0.7, 0}
\definecolor{mustard}{cmyk}{0, 0.3, 1, 0}

\newcommand{\circlednum}[2][]{%
    \tikz[baseline=(char.base)]{
        \node[shape=circle,fill=lightgreen,text=black,draw,inner sep=1pt,#1] (char) {\scriptsize #2};
    }%
}

\newcommand{\circledplain}[1]{\tikz[baseline=(char.base)]{
        \node[shape=circle,draw,inner sep=1pt,scale=0.8] (char) {#1};}}

\makeatletter
\renewenvironment{proof}[1][\proofname]{%
    \par\vspace{-0.2\baselineskip}% space above (was -0.5)
    \pushQED{\qed}%
    \normalfont
    \topsep4pt \partopsep0pt % allow a little breathing room
    \trivlist
    \item[\hskip\labelsep\itshape
                #1\@addpunct{.}]\ignorespaces
}{%
    \popQED\endtrivlist\@endpefalse
    \vspace{-0.2\baselineskip}% space below
}
\makeatother

\newcolumntype{N}{>{\raggedleft\arraybackslash}p{1.4cm}} % Right-aligned numbers
\newcolumntype{U}{>{\raggedright\arraybackslash}p{0.8cm}} % Left-aligned units with markers

\ifusenix
    \usepackage{titlesec}
    \titlespacing*{\section}{0pt}{7pt plus 3pt minus 3pt}{3pt plus 3pt minus 2pt}
    \titlespacing*{\subsection}{0pt}{4pt plus 3pt minus 2pt}{1pt plus 3pt minus 1pt}
    \titlespacing*{\subsubsection}{0pt}{4pt plus 3pt minus 2pt}{0pt plus 3pt minus 1pt}
    \titleformat{\section}{\large\bfseries}{\thesection}{1em}{}
    \titleformat{\subsection}{\normalsize\bfseries}{\thesubsection}{1em}{}
\fi

\toggletrue{anony}  % disable author block
\newtoggle{showmarks}

\togglefalse{showmarks} % disable comments and remarks

\iftoggle{showmarks}{
    \newcommand\red[1]{\textcolor{red}{#1}}
    \newcommand\redstrike[1]{\red{\sout{#1}}}
    \newcommand\green[1]{\textcolor{\green}{#1}}
    \newcommand\greenstrike[1]{\green{\sout{#1}}}
    \newcommand\orange[1]{\textcolor{orange}{#1}}
    \newcommand\orangestrike[1]{\orange{\sout{#1}}}
    \newcommand\blue[1]{\textcolor{blue}{#1}}
    \newcommand\bluestrike[1]{\blue{\sout{#1}}}
    \newcommand\purple[1]{\textcolor{purple}{#1}}
    \newcommand\purplestrike[1]{\purple{\sout{#1}}}
    \newcommand\teal[1]{\textcolor{teal}{#1}}
    \newcommand\tealstrike[1]{\teal{\sout{#1}}}
    \newcommand\turquoise[1]{\textcolor{turquoise}{#1}}
    \newcommand\turquoisestrike[1]{\turquoise{\sout{#1}}}
    \newcommand\darkgreen[1]{\textcolor{darkgreen}{#1}}
    \newcommand\darkgreenstrike[1]{\darkgreen{\sout{#1}}}
    \newcommand\lime[1]{\textcolor{lime}{#1}}
    \newcommand\limestrike[1]{\lime{\sout{#1}}}
    \newcommand\olivegreen[1]{\textcolor{olivegreen}{#1}}
    \newcommand\olivegreenstrike[1]{\olivegreen{\sout{#1}}}

    \newcommand{\seojin}[1]{[\orange{\sf\textit{#1 - Seojin}}]}
    \newcommand{\yibo}[1]{[\purple{\sf\textit{#1 - Yibo}}]}
    \newcommand{\junzhou}[1]{[\blue{\sf\textit{#1 - junzhou}}]}
    \newcommand{\draft}[1]{\turquoise{\sf\textit{#1}}}
}{
    \newcommand\red[1]{#1}
    \newcommand\redstrike[1]{\unskip}
    \newcommand\green[1]{#1}
    \newcommand\greenstrike[1]{\unskip}
    \newcommand\orange[1]{#1}
    \newcommand\orangestrike[1]{\unskip}
    \newcommand\blue[1]{#1}
    \newcommand\bluestrike[1]{\unskip}
    \newcommand\purple[1]{\unskip}
    \newcommand\purplestrike[1]{\unskip}
    \newcommand\teal[1]{\unskip}
    \newcommand\tealstrike[1]{\unskip}
    \newcommand\turquoise[1]{\unskip}
    \newcommand\turquoisestrike[1]{\unskip}
    \newcommand\darkgreen[1]{\unskip}
    \newcommand\darkgreenstrike[1]{\unskip}
    \newcommand\lime[1]{\unskip}
    \newcommand\limestrike[1]{\unskip}
    \newcommand\olivegreen[1]{\unskip}
    \newcommand\olivegreenstrike[1]{\unskip}

    \newcommand{\seojin}[1]{}
    \newcommand{\yibo}[1]{}
    \newcommand{\junzhou}[1]{}

    \newcommand{\draft}[1]{}
}

\begin{document}

\title{
  DDB: Source-Level Interactive Debugging for Distributed Applications
}

%% ============================================================
%% STEP 7: Authors (template-specific formatting)
%% ============================================================
\ifusenix
  \author{
    {\rm Paper}\\
    University of Southern California
  }
\else
  % \author{Yibo Yan}
  % \author{Junzhou He}
  % \author{Seo Jin Park}
  % \affiliation{%
  %   \institution{University of Southern California}
  %   \city{}
  %   \country{}
  % }
  % \affiliation{%
  %   \institution{University of Southern California}
  %   \city{}
  %   \country{}
  % }
  % \affiliation{%
  %   \institution{University of Southern California}
  %   \city{}
  %   \country{}
  % }
  % 
  \author{Yibo Yan \quad Junzhou He \quad Seo Jin Park}
  \affiliation{%
    \institution{University of Southern California}
    \city{}
    \country{}
  }
  \renewcommand{\authors}{Yibo Yan, Junzhou He, and Seo Jin Park}
  \renewcommand{\shortauthors}{Yan et al.}
\fi

%% ============================================================
%% STEP 8: Abstract
%%   Explicit inputs let arXiv detect the abstract dependency.
%%   ACM: input before \maketitle.  USENIX: input after.
%% ============================================================
\ifusenix\else
  \begin{abstract}

    Interactive debugging is an effective tool for understanding program behavior at the source level, allowing developers to pause execution, navigate the call stack, and inspect runtime state.
    However, interactive debuggers are designed for single-process execution, and interactive debugging has been widely considered impractical for distributed systems.
    Call stacks stop at process boundaries, debugging state fails to survive infrastructure dynamics, and, most critically, debugger-induced execution pauses trigger catastrophic timeout cascades that destroy the intended debug flow.
    Consequently, developers are forced to abandon live hypothesis testing in favor of unwieldy and iterative log-and-redeploy cycles.

    We present DDB, a source-level interactive debugger that extends interactive debugging capabilities to distributed applications.
    We show that each of these challenges admits a targeted solution.
    To bridge disjoint processes, Distributed Backtrace (DBT) embeds compact causality metadata in every RPC and reconstructs a unified call stack across RPC boundaries.
    To manage the lifecycle of a distributed session, an intent-preserving control plane automatically coordinates and propagates breakpoints across dynamic process sets.
    To make pausing safe, Pause-Erased Time (PET) virtualizes each process's clock, decoupling logical time from physical pauses and preventing timeout cascades.
    DDB integrates with an RPC framework in 10–60 lines of code.
    Evaluated on gRPC, ServiceWeaver, Nu, and Quicksand across up to 122 processes, DDB achieves 30~ms median cross-RPC backtrace latency, sub-5~ms time jump under repeated execution pauses, and adds 1–5\% throughput overhead, comparable to attaching a single-process debugger.
    In a controlled user study, DDB achieves a 100\% fault localization success rate (compared to 38.5\% for baseline tools) with a median localization time of $\sim$8 minutes.

\end{abstract}

\fi

%% ============================================================
%% STEP 9: CCS concepts (ACM-only; ignored by USENIX)
%% ============================================================
\begin{CCSXML}
<ccs2012>
   <concept>
       <concept_id>10010147.10010919</concept_id>
       <concept_desc>Computing methodologies~Distributed computing methodologies</concept_desc>
       <concept_significance>500</concept_significance>
       </concept>
   <concept>
       <concept_id>10010147.10010169</concept_id>
       <concept_desc>Computing methodologies~Parallel computing methodologies</concept_desc>
       <concept_significance>500</concept_significance>
       </concept>
   <concept>
       <concept_id>10011007.10011006</concept_id>
       <concept_desc>Software and its engineering~Software notations and tools</concept_desc>
       <concept_significance>500</concept_significance>
       </concept>
 </ccs2012>
\end{CCSXML}

\ccsdesc[500]{Computing methodologies~Distributed computing methodologies}
\ccsdesc[500]{Computing methodologies~Parallel computing methodologies}
\ccsdesc[500]{Software and its engineering~Software notations and tools}

% \keywords{}

%% ============================================================
%% STEP 10: \confmaketitle handles ordering correctly
%% ============================================================
\ifusenix\date{}\fi
\confmaketitle
\ifusenix
  
\fi
% \pagestyle{plain}

%% ============================================================
%% Body sections
%% ============================================================

\begin{figure}[t]
      \centering
      \includegraphics[width=\linewidth]{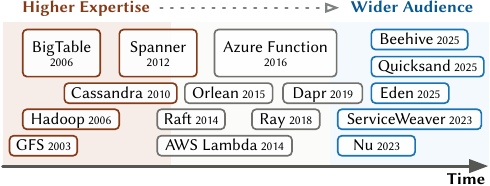}
      \caption{Distributed programming paradigms and infrastructure have become much more accessible to ordinary application developers. The adoption of distributed programming has increased while the required expertise has dropped substantially.}
      \label{fig:ddb-trend}
\end{figure}

\section{Introduction}

Serverless platforms, microservice architectures, and modular programming frameworks increasingly decompose what was once a single-process application into code that executes across multiple services and machines~\cite{jonasCloudProgrammingSimplified2019, sahraei2023xfaas, gannonCloudNativeApplications2017, ghemawatmoderndevelopmentcloud2023, ruanNuAchievingMicrosecondScale2023}.
The same trend spans actor frameworks~\cite{bernsteinorleansdistributedvirtual, moritzraydistributedframework2018, akkabuildrun}, distributed application runtimes~\cite{dapr}, and memory-disaggregated runtimes~\cite{wang2020semeru, ruanaifm, quicksand-nsdi25}.
The result is that application developers---who may have little expertise in distributed systems---are now routinely writing code whose execution spans dozens of processes.
As these paradigms and infrastructure have emerged, the expertise barriers to adopting distributed programming have dropped substantially, as depicted in~\figref{fig:ddb-trend}.
\looseness=-1

For debugging single-process programs, a developer attaches a debugger (\eg GDB~\cite{gdbgnuproject}, LLDB~\cite{lldb}), sets breakpoints, and inspects call stacks, variables, and caller state---a tight hypothesis--inspect--refine loop for understanding \textit{why} code misbehaves.
Once an application becomes distributed, this workflow collapses.
A breakpoint in one process reveals nothing about the remote callers that triggered the current execution: the chain of upstream RPCs, the arguments they passed, and their runtime state remain invisible.
Attaching separate debuggers to every upstream process and locating the correct thread among potentially hundreds~\cite{beschastnikh2016debugging} does not scale well.
Moreover, pausing any process triggers leader re-elections, transaction aborts, or cluster-wide crash recovery, because distributed systems rely on failure-detection timeouts as short as 50--100\,ms~\cite{yuan2012characterizing}.

For traditional distributed infrastructure, such as large-scale storage systems~\cite{ghemawat2003google, corbett2013spanner} and distributed databases~\cite{lakshman2010cassandra}, this absence of interactive debugging has been mitigated by heavy investment in alternative diagnostic tooling.
The teams that build these systems at major organizations invest heavily in structured logging~\cite{yuan2012characterizing}, distributed tracing~\cite{jaeger, OpenTelemetry, openzipkindistributedtracing}, and custom monitoring infrastructure to diagnose faults in production.
Application developers typically lack deep expertise in distributed systems; without interactive debugging, they resort to iterative log-and-redeploy cycles that studies show consume days to weeks---and in extreme cases over a month---per fault~\cite{zhouFaultAnalysisDebugging2021, yuan2014simple}.
As quantified in our user study (\secref{sec:user-study}), participants using baseline tools (GDB, distributed tracing) failed to localize cross-service faults in 61.5\% of cases and often exceeded the 20-minute time limit.

% The tools available today serve fundamentally different goals.
These failures stem from a fundamental mismatch: each existing tool addresses a different aspect of distributed diagnosis, but none provides the interactive workflow that cross-RPC, source-level fault localization requires.
Distributed tracing and logging reconstruct request paths and record event histories~\cite{jaeger, openzipkindistributedtracing, OpenTelemetry}; record-and-replay enables deterministic post-hoc analysis~\cite{rrlightweightrecording, geelsFridayGlobalComprehension2007, vearne2026vearne}.
Interactive debuggers exist for parallel and HPC settings~\cite{acharGoTchaInteractiveDebugger2019, totalviewdebuggerhpc, linaroddt}, but these assume each process can be paused independently without side effects---an assumption that breaks when applications use RPC timeouts for failure detection.
None supports what application developers need: pausing a running distributed execution, navigating the cross-RPC call chain, and examining live runtime state across service boundaries.

We present \sysname, a distributed interactive debugger that provides this workflow.
With DDB, a developer can set a breakpoint once and it is automatically inserted across all relevant replicas, including processes that join later through scaling or restart.
When the breakpoint fires, \sysname defaults to a pause-the-world model (akin to GDB's default behavior), halting all attached processes to provide a consistent global view.
Because all processes are concurrently paused, waiting for manual inspection is safe: \sysname virtualizes each process's view of time so that application-level timeouts and timers are unaffected by the execution pause.
\looseness=-1

We observe that in development and test environments, where coordinated global pauses are feasible, cross-RPC call stacks can be reconstructed by embedding compact causality metadata in every RPC;
dynamic process sets can be managed through an intent-preserving control plane;
and timeout cascades can be eliminated by virtualizing each process's view of time, which requires addressing a temporal anomaly (static application state combined with advancing kernel time) that no existing clock-virtualization mechanism handles~(\secref{sec:model:pet}).
Because interactive debugging naturally targets staging and test environments rather than production, the overhead of these mechanisms is acceptable.

Like traditional interactive debuggers (\eg GDB~\cite{gdbgnuproject}, LLDB~\cite{lldb}), \sysname uses the pause-the-world behavior by default, targeting semantic logic and state errors across RPC boundaries; concurrency bugs that depend on precise thread interleaving fall outside this scope, as they do for all pause-based debuggers.
Under the same assumptions as traditional interactive debuggers, \sysname provides the best debugging experience when debug symbols are embedded in the compiled binaries and the optimization level is tuned down.
\looseness=-1

This paper makes the following contributions:

\begin{itemize}
      \item
            \textbf{Distributed Backtrace (DBT).} Cross-RPC stack reconstruction, transparent to user-level threads, that enables live inspection of caller state across RPC boundaries (\secref{sec:model:dbt}).

      \item
            \textbf{Intent-Preserving Control Plane.} A debug-intent abstraction that propagates breakpoints and commands across replica scaling, node churn, and computation migrations (\secref{sec:model:control-plane}).

      \item
            \textbf{Pause-Erased Time (PET).} Time virtualization via \textit{Virtual Deadline Enforcement} that decouples physical and logical time, enabling safe debugger pauses without timeout cascades (\secref{sec:model:pet}).

      \item
            \textbf{Implementation and Evaluation.} We integrate \sysname with four RPC frameworks in two languages (10--60 LoC each) and evaluate at 122-process scale: 1--5\% throughput overhead, $\sim$30\,ms backtrace latency, and \textless5\,ms time jumps. A controlled user study shows \sysname delivers 100\% cross-service fault localization vs. an aggregated 38.5\% localization rate with baseline tools (\secref{sec:implementation}, \secref{sec:eval}).
\end{itemize}

\section{Background and Motivation}
\label{sec:background}

\subsection{The Developer Workflow and the Missing Capability}
Consider a developer debugging a distributed key-value store where certain read requests return stale values after a write.
The developer can reproduce the issue with a specific sequence of write and read operations issued from a test client.
Distributed tracing identifies the request path: a client request reaches a router, which forwards it to one of several ``storage-node'' replicas.
Structured logging on the storage node confirms that the node served the read from a cached entry and that a cache-invalidation RPC from the router did arrive, but the cached entry was not updated.

At this point the developer knows \textit{where} the stale read occurs and \textit{what} happened, but not \textit{why} the invalidation failed to take effect.
Understanding the root cause requires inspecting runtime state that no existing log captured.
The root cause lies in cross-service runtime state:
\textit{what version identifier did the router embed in the invalidation RPC, what did the storage node's comparator evaluate, and did the invalidation target a different cache slot than intended?}
In single-process development, GDB answers such questions directly: the developer sets a breakpoint, inspects local variables, and walks the call stack.
For a distributed application, no equivalent tool exists.
\looseness=-1

The developer falls back to iterative logging: instrumenting the invalidation handler, redeploying, and re-running the test, with each round revealing only the specific variables the developer chose to log.
After several such cycles spanning two services, the developer discovers that the router was embedding a version field from a stale routing-table entry---a conclusion that a single interactive inspection of the cross-service call chain could have reached in minutes.

Monitoring, tracing, logging, and record-and-replay each address a distinct need, but none provides the capability this workflow demands: live, interactive inspection across service boundaries.
The following subsections detail the entrypoints where interactive debugging begins (\secref{sec:entrypoints}) and the specific challenges that prevent it from working across process boundaries (\secref{sec:challeneges}).

\subsection{Entrypoints for Interactive Debugging}
\label{sec:entrypoints}

An interactive debugging session begins at an entrypoint, such as a manual breakpoint, a watchpoint, an assertion failure (\texttt{SIGABRT}), or a memory fault (\texttt{SIGSEGV}).
In single-process development, these entrypoints cleanly pause execution for live inspection.
In a distributed setting, catching the event is insufficient because the root cause often resides in an upstream remote caller or manifests across dynamic, auto-scaling replicas.
Regardless of the specific trigger, every entrypoint shares a core requirement: once execution pauses, the developer must reconstruct the full cross-process execution context---including the chain of remote callers, their arguments, and their runtime state---to reason about the application.
The next section examines the specific challenges that make this reconstruction difficult.

\subsection{Why Interactive Debugging Breaks Across Process Boundaries}
\label{sec:challeneges}

Distributed execution breaks interactive debugging in three ways (\textcircled{1}--\textcircled{3} in \figref{fig:motivation-barriers}): call stacks end at process boundaries, debugging state does not survive churn, and temporary execution pauses trigger timeout-based failure detection.

\parab{Call stacks stop at the process boundary~(\figref{fig:motivation-barriers}~\textcircled{1}).}
Consider a request that traverses services A → B → C.
The developer suspects C's handler and sets a breakpoint.
When the breakpoint fires, GDB reveals C's local stack but nothing about B or A---neither the arguments B passed nor the execution path that led B to issue the call.
Recovering this context requires attaching a separate debugger to B, identifying the correct thread among potentially hundreds, inspecting B's stack, and repeating for A.
The manual effort scales linearly with call depth.
User-level threading compounds the difficulty: if B uses goroutines, the goroutine that sent the RPC to C may have been descheduled and its OS thread reused, so GDB shows the wrong stack entirely.

\parab{Debugging operations require per-process manual effort~(\figref{fig:motivation-barriers}~\textcircled{2}).}
Consider the 180-process deployment of the \texttt{socialnet} benchmark~\cite{ganOpenSourceBenchmarkSuite2019}.
To debug an issue, a developer must manually identify relevant replicas and insert breakpoints on each individually, which is a cumbersome and tedious process.
Furthermore, the process set changes constantly: auto-scaling adds replicas, rolling restarts replace them, and emerging distributed programming frameworks~\cite{quicksand-nsdi25, quicksand, ruanNuAchievingMicrosecondScale2023, ghemawatmoderndevelopmentcloud2023} introduce runtime computation migrations.
Unless breakpoints automatically follow these state changes and computation migrations, the developer silently loses debugging coverage as execution moves across processes.
\looseness=-1

\parab{Execution pauses cause timeout cascades~(\figref{fig:motivation-barriers}~\textcircled{3}).}
Distributed applications use timeouts extensively to detect and handle failures.
We surveyed timeout thresholds in three open-source distributed systems: LogCabin~\cite{ongaroSearchUnderstandableConsensus} (a Raft implementation), RAMCloud~\cite{ousterhoutRAMCloudStorageSystem2015} (in-memory distributed storage), and Nu~\cite{ruanNuAchievingMicrosecondScale2023} (a distributed programming framework).
Thresholds range from 100 ms (RPC retry) to 500 ms (election timer).
A developer pausing at a breakpoint for even a few seconds triggers every timeout in this range.
The consequences range from harmless (extra heartbeats) to severe (aborted transactions, cluster-wide crash recovery for a healthy node).
The disruption breaks the intended debug flow.
For example, when an execution pause spuriously triggers crash recovery, the system transitions to a recovery state that did not exist before the pause, and the developer ends up inspecting recovery logic instead of the intended code path.

\begin{figure}[t]
    \centering
    \includegraphics[width=\textwidth]{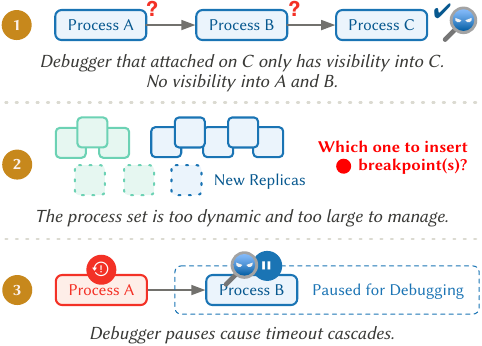}
    \caption{Three challenges of employing interactive debugging in distributed applications.}
    \label{fig:motivation-barriers}
\end{figure}

\begin{figure*}[t]
    \centering
    \includegraphics[width=\linewidth]{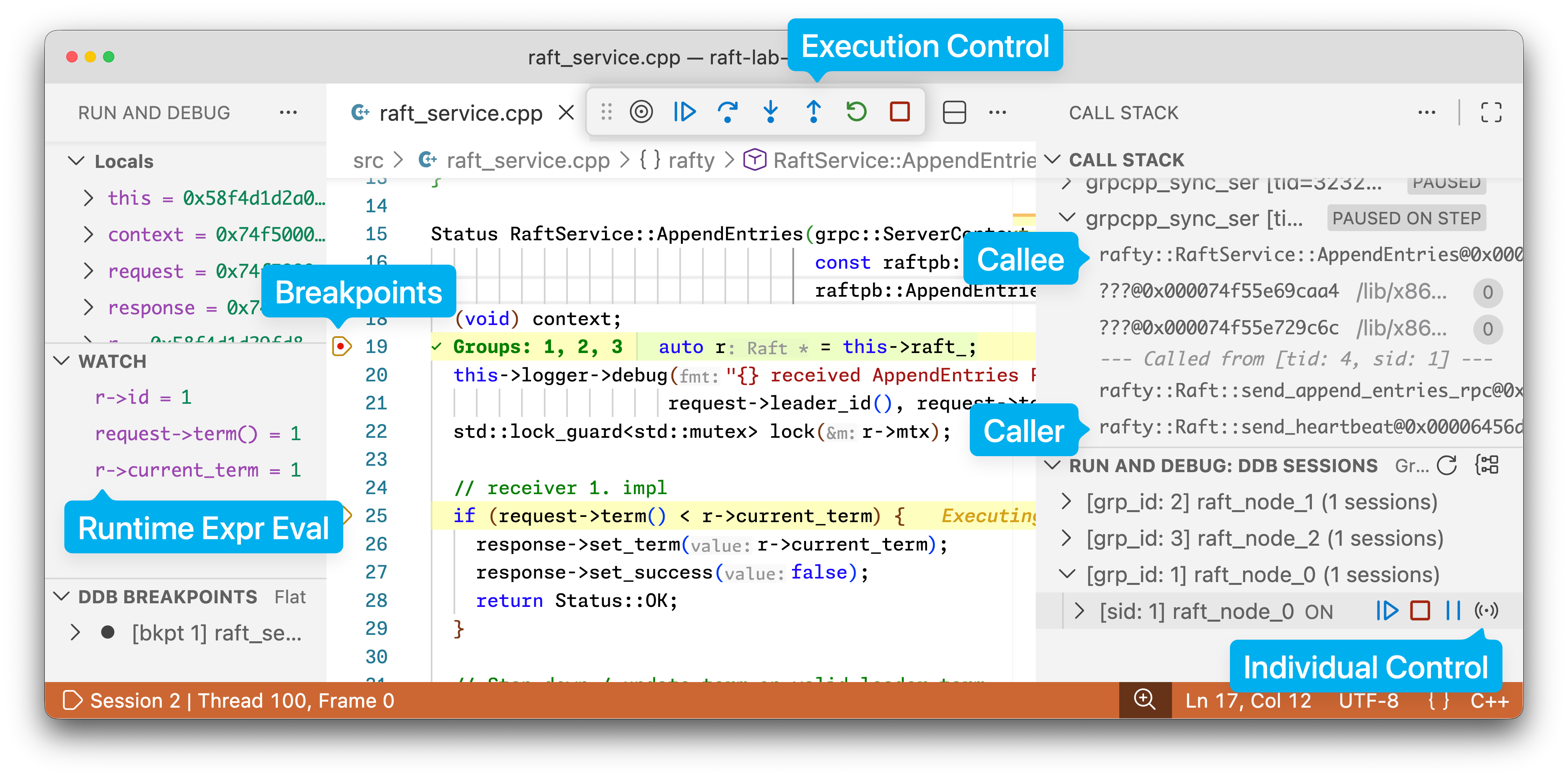}
    \caption{The \sysname VS Code graphical frontend debugging a three-node Raft cluster (with one leader and two followers in the screenshot). Both followers have paused at the local \texttt{AppendEntries} RPC handler (middle). The Call Stack (top-right) displays a Distributed Backtrace extending to the leader's remote calling frame. The \sysname Sessions pane (bottom-right) and debugger control pane (top) allow granular per-process execution control while the rest of the cluster remains globally paused.}
    \label{fig:ddb-showcase}
\end{figure*}

\section{Overview: A \sysname Debugging Session}

To illustrate how \sysname addresses the challenges outlined in \secref{sec:background}, \figref{fig:ddb-showcase} shows a live debugging session on a 3-node Raft cluster.
In Raft, a single \textit{leader} node coordinates the cluster by sending periodic heartbeat RPCs (\texttt{AppendEntries}) to each \textit{follower} replica; if a follower stops receiving heartbeats, it initiates a leader election.
The developer is investigating a bug where followers are improperly rejecting the leader's heartbeats.
Instead of adding iterative logging statements and restarting the cluster, the developer attaches the \sysname VSCode extension and resolves the issue in a single interactive session.
\looseness=-1

\parab{Setting a breakpoint across replicas.}
The developer sets a single breakpoint at the \texttt{AppendEntries} RPC handler.
\sysname's intent-preserving control plane~(\secref{sec:model:control-plane}) automatically applies this breakpoint to all follower replicas in the cluster, without requiring the developer to identify or attach to each process individually.

\parab{Pausing the cluster safely.}
When the leader broadcasts its next heartbeat, both followers hit the breakpoint concurrently and \sysname defaults to a pause-the-world model, halting the entire cluster.
Without \sysname, this pause would be fatal: Raft's election timeouts (typically 100--500\,ms) would expire immediately, triggering a spurious leader election that destroys the state being debugged.
\sysname's Pause-Erased Time (PET)~(\secref{sec:model:pet}) virtualizes each process's clock, keeping the cluster's failure detectors entirely unaware of the execution pause.
\looseness=-1

\parab{Inspecting the cross-RPC call chain.}
With the cluster safely paused, the developer opens the Call Stack pane.
Instead of the local stack frames that a single-process debugger would show, \sysname presents a Distributed Backtrace (DBT)~(\secref{sec:model:dbt}) that extends from the follower's handler back to the leader's \texttt{send\_heartbeat} caller frame.
The developer clicks on the upstream leader frame to inspect the exact arguments sent across the network.
Local variables and expressions are automatically re-evaluated in the context of the selected frame.

\parab{Granular per-process control.}
While the cluster remains globally paused, the developer selects one follower and single-steps it forward (identifiable as ``PAUSED ON STEP'' in the interface), while the other follower stays suspended at the entrypoint.
The developer then evaluates expressions in the Watch pane---comparing the incoming \texttt{request->term()} against the local \texttt{r->current\_term}---and observes how divergent state across replicas affects execution flow, arriving at the root cause.
This coordination is managed by \sysname's intent-preserving control plane~(\secref{sec:model:control-plane}).

This workflow relies on three core technical pillars---Distributed Backtrace, an intent-preserving control plane, and Pause-Erased Time (\figref{fig:ddb-pillar-overview}), as described in the following section.

\begin{figure}[t]
    \centering
    \includegraphics[width=\columnwidth]{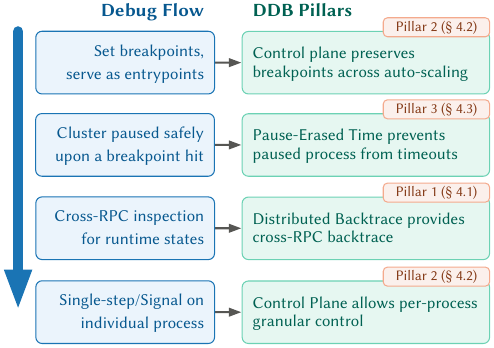}
    \caption{Overview of \sysname's three pillars. Distributed Backtrace (DBT) reconstructs cross-RPC call stacks for live inspection. The intent-preserving control plane automatically propagates debug operations across dynamic process sets and enables granular per-process control. Pause-Erased Time (PET) virtualizes each process's clock to prevent debugger pauses from triggering timeout cascades.}
    \label{fig:ddb-pillar-overview}
\end{figure}

\section{DDB Debugging Model and Design}
\label{sec:debugging_model}

DDB is organized around three pillars: Distributed Backtrace (DBT), an intent-preserving unified control plane, and Pause-Erased Time (PET).
For each pillar, we describe the developer-facing abstraction, the mechanism that realizes it, and the guarantees it provides.

\subsection{Distributed Backtrace (DBT)}
\label{sec:model:dbt}

\parab{Abstraction.}
When execution pauses at any entrypoint, the developer issues \texttt{dbt} command to start a distributed backtrace.
DDB returns a single, unified call stack that begins at the current frame in the paused process and extends backward across each RPC boundary to the root of the request path.
For a request that traversed A → B → C, pausing in C and issuing \texttt{dbt} presents frames from C, then B, then A, in standard GDB backtrace format.
The developer can navigate to any frame (including frames in remote processes) to inspect local variables, read heap-allocated objects, and modify values, much as in single-process debugging. We use this three-service request path as a running example throughout the rest of this section.

\parab{Mechanism.}
Realizing DBT requires solving two problems: locating the correct caller context in a remote process, and presenting a coherent unified call stack.

\textit{Caller-context capture design.}
When process A sends an RPC to B, the distributed backtrace must record enough context at the call site so that the caller's stack can later be reconstructed to form a coherent call stack along the entire causal chain (the exact recursive algorithm is detailed in~\apdxref{app:dss}).
A strawman design records the caller's OS thread ID at the time of sending the RPC and, at reconstruction time, retrieves the thread using that ID to perform stack unwinding.
This fails when the application uses user-level threads (uthreads), such as Go goroutines, CILK workers~\cite{blumofe1995cilk}, Folly fibers~\cite{metafolly}, Shenango~\cite{ousterhout2019shenango}, and Caladan~\cite{fried2020caladana}, which multiplex over a smaller pool of OS threads.
Between the RPC send and the callee's pause, the runtime may reschedule the calling uthread off its original OS thread.
Therefore, unwinding the stack on the caller thread identified by the OS thread ID produces the wrong call frames.

\sysname instead captures the \textit{thread context} (a set of register values that identify the stack) at the call site and embeds it directly in the RPC payload as compact caller-context metadata.
Because the metadata carries the \textit{thread context} rather than a thread identity, it remains valid regardless of thread scheduling decisions made by the runtime.
This embedding is the only change to the RPC protocol and requires approximately 20\;LoC per framework.
At the callee process, the RPC handler extracts this metadata and stores it in a lightweight injected extraction frame (\texttt{DDB::Backtrace::extraction}, as shown in~\figref{list:meta-embed}), making it available on the stack for \sysname to read at reconstruction time.

\textit{Cross-process stack reconstruction.}
When \texttt{dbt} is issued at process C, \sysname locates the injected extraction frame in C's stack and reads the caller-context metadata, which identifies B by the embedded IP address and PID.
Crucially, the reconstruction of B's stack does not happen inside C's address space.
Instead, \sysname's centralized control plane reads the metadata, contacts the debugger agent attached to process B out-of-band, and fetches the necessary stack memory segments.
The agent then temporarily restores B's thread context and performs standard DWARF stack unwinding using B's debug symbols.
\sysname repeats this procedure for B's caller, iterating until no further caller-context metadata exists.
It then concatenates all stack frames into a single stack and presents it to the developer.
The entire procedure is transparent and the developer sees one contiguous call stack.
\looseness=-1

% \textit{User-level thread compatibility.} 
% Many distributed applications often use lightweight threads, such as Go's goroutine, CILK, Folly fibers, Caladan, \etc \yibo{cite}, which multiplex user-level threads (uthreads) over a smaller pool of OS threads.
% Between the moment a uthread issues an RPC and the moment the callee pauses, the runtime may reschedule that uthread off its original OS thread.
% % When a uthread issues an RPC, its OS thread is typically reused by the runtime.
% GDB, which reasons stacks by OS thread, would then display the wrong stack entirely.
% % If the developer later pauses the callee, the caller's OS thread may be running a different uthread; GDB's view of that thread's stack would be wrong.
% % DDB addresses this by embedding the thread context (several register values) at the call site, not the OS thread ID.
% DDB avoids this problem by embedding the thread context (several register values) at the call site rather than na\"{\i}vely recording the caller OS thread id.
% When the developer issues \texttt{dbt} from the callee process, DDB restores the saved thread context in caller before unwinding, reconstructing the correct uthread stack regardless of how the runtime has rescheduled OS threads in the interim.

\parab{Concurrent request handling.}
Distributed backtrace correctly reconstructs call stacks for concurrent requests as long as the corresponding caller contexts remain valid.
For example, suppose a client uses three threads to issue three concurrent requests to a server.
If each sending thread remains blocked on its RPC (\ie synchronous blocking RPCs), \sysname can independently perform \texttt{dbt} for any of the three request chains by placing a breakpoint in the server-side handler logic.
Because each RPC carries the caller's thread context, \sysname can uniquely identify and reconstruct the corresponding caller stack from each server-side handler thread.
Consequently, concurrent requests remain isolated during backtrace reconstruction rather than being conflated with one another.
However, distributed backtrace does not currently support event-driven concurrency, where the caller stack may be unwound or reused immediately after the request is issued.
We discuss this limitation in more detail below.
\looseness=-1

\parab{Guarantees.}
DBT provides three guarantees.
\textit{Completeness:} the cross-RPC caller chain is reconstructed to the root or to the earliest reachable caller, correctly recovering each caller's stack regardless of user-level thread rescheduling.
\textit{Inspectability:} every reconstructed frame exposes locals, arguments, and heap-accessible objects for reading and modification.
\textit{Order preservation:} frames appear in reverse of request-path order (from callee to ancestor callers).

\parab{Edge cases.} Under replication, DBT reconstructs the chain for the specific request that reached the paused process.
If a caller has terminated before the callee pauses, DBT reports a truncated backtrace with a ``caller terminated'' annotation.
If an upstream framework lacks \sysname integration, DBT stops at that boundary and reports the truncation reason.

\begin{figure}[!t]
\begin{lstlisting}[
    language=C++,
    frame=lines,
    basicstyle=\small\fontfamily{zi4}\selectfont,
    tabsize=1,
    numbersep=5pt,
    framesep=5pt,
    columns=fullflexible,
    keepspaces=true,
    showstringspaces=false,
    commentstyle=\color{teal},
    literate={0}{{\textcolor{blue}{0}}}1
             {1}{{\textcolor{blue}{1}}}1
             {2}{{\textcolor{blue}{2}}}1
             {3}{{\textcolor{blue}{3}}}1
             {4}{{\textcolor{blue}{4}}}1
]
0  RpcFlow::Parse 
1  |   RpcMethod::RunHandler 
2  |   |   DDB::Backtrace::extraction  // injected
2  |   |   |   meta = extractor();     // local var
3  |   |   |   RPCServer::user_handler
4  |   |   |   |   // User defined RPC handlers
\end{lstlisting}
\vspace{-2pt}
\caption{Illustration of the frame injection. Blue numbers on the left indicate the frame number, where 0 is the outermost frame and 4 is the innermost frame. Frame~\#2 is the injected frame. The local variable \texttt{meta} is inside the \texttt{extraction} frame.}
\label{list:meta-embed}
\end{figure}

\parab{Limitations.}
Distributed backtrace performs stack unwinding under the assumption that the relevant caller stacks remain available.
It therefore requires the caller's stack frames to remain valid when the backtrace is collected and currently supports only synchronous, blocking RPCs; in particular, it is incompatible with event-driven, asynchronous, or stackless concurrency models that do not preserve a physical caller stack across an RPC boundary.
When \sysname encounters such an asynchronous RPC boundary, \texttt{dbt} returns a truncated call stack because the current design does not propagate caller thread context across that asynchronous RPC boundary.
Accordingly, reconstructing a complete distributed call chain requires every RPC along the chain to preserve a valid ancestor caller context, as is the case for synchronous, blocking RPCs.
\looseness=-1

This is a longstanding challenge for interactive debuggers rather than a limitation specific to \sysname's design: any debugger that offers conventional backtrace capability, where the developer can switch to an arbitrary ancestor frame and inspect its values, must assume that caller stacks remain valid.
To capture causality across asynchronous requests, prior work either falls back to tracing and logging, giving up interactive inspection~\cite{ravindranath2012appinsight, ma2018kernelsupported, weng2021argus}, or offers only a causal view of ancestor callers by function name, without inspectable on-stack values~\cite{stanleyCausewayMessageorientedDistributed}.

\subsection{Intent-Preserving Unified Control Plane}
\label{sec:model:control-plane}
\parab{Abstraction.}
We define a \textit{debug-intent} as a triple: a source-code \textit{location} (file and line number), a \textit{scope} identifying which processes to target (``all replicas of service X'' or ``a specific process''), and a debugger \textit{command} (``set breakpoint'', ``continue'', ``evaluate expression'', ``send signal'', \etc).
The control plane resolves each intent against the current application topology and propagates it to every matching process.
This unified control plane lowers the cognitive overhead of operating individual debuggers on distributed applications.
Crucially, to provide a consistent view of distributed state, the control plane enforces a \textit{pause-the-world} policy by default: whenever any attached process pauses (due to a breakpoint hit, single-step operation, or manual interrupt), the control plane automatically broadcasts a pause command to all other processes attached to the debugging session.

For example, if a developer sets a breakpoint targeting ``all replicas of service B'', \sysname will preserve the debug-intent and install the breakpoint on every current replica and on every future replica that joins, restarts, or receives a migrated computation.
The developer never manages a process list.
This abstraction reduces the operational complexity of debugging from per-process to per-intent, regardless of deployment size.
\looseness=-1

\parae{Pause-the-world Policy.}
Pausing the world prevents in-flight requests from advancing while the developer inspects the system, decoupling the human speed of inspection from the execution speed of the cluster.
For example, distributed backtrace could pause only the processes involved in a call chain, but other processes could continue executing and make the inspected state stale or inconsistent with the rest of the system.
This challenge is inherent to interactive debugging: GDB similarly stops all threads by default, while providing a non-stop mode for independently controlling individual threads.
Likewise, \sysname allows developers to disable pause-the-world and pause only the processes involved in a debugging action, at the cost of accounting for state changes caused by processes that remain active.

\parae{Comparison with Consistent Snapshots.}
Classic snapshot algorithms~\cite{chandy1985distributed} \textit{record} a causally consistent cut while the system keeps running, but the recorded state may not match any global state the execution actually passed through, and it is a static copy that the developer cannot probe or modify.
Pause-the-world instead \textit{freezes} the execution itself: since a paused process cannot send messages, the resulting cut is always consistent, and once the last process stops, the frozen state is the actual global state of the system at that instant.
In-flight messages stay buffered in the network stack and are delivered on resume, so the developer inspects, and may modify, a real reachable state rather than a recorded copy.
The cost is that the pause is not instantaneous: later-paused processes keep executing briefly during the skew window (\secref{sec:model:pet} discusses the resulting timeouts).

\parab{Logical Groups and Scope Resolution.}
\sysname organizes processes into \textit{logical groups}.
By default, processes running the same binary belong to the same group.
In the case of microservices, the default binary-based grouping policy puts all replicas of the same service under the same group.
In the \texttt{socialnet} example with 36 services at 5 replicas, the developer sees 36 groups rather than 180 processes; expanding a group reveals its members.

When a developer specifies a source-code location, \sysname performs \textit{automatic scope resolution}: it identifies which logical groups map that source file, presents them as selectable scopes, and lets the developer choose a target.
This eliminates the need to know which binary hosts a source file---a non-trivial question in distributed application deployments.
Internally, DDB maintains a two-level \textit{source-map}, which maps source-code paths to logical groups and maps logical groups to running processes.

% \begin{figure}[t]
%     \centering
%     \includegraphics[width=0.95\columnwidth]{figures/source_map_new.png}
%     \caption{The two-level source-map: source-code paths map to logical groups, and logical groups map to processes.}
%     \label{fig:source-map}
% \end{figure}

% ~(\figref{fig:source-map}).

\parab{Intent Preservation Under Dynamics.}
Distributed applications are dynamic: auto-scaling, rolling restarts, and failover continuously change the process set.
The control plane tracks these changes and continuously enforces each intent on the correct set of processes.
Specifically, it maintains a continuous invariant: an intent is active on a process if and only if the process matches the intent's logical scope and its loaded binary maps the specified source-code location.
When a new process joins the system, the control plane evaluates this invariant and applies the relevant intents immediately upon debugger attachment.
Conversely, when a process terminates, its debugging state is cleanly garbage-collected.
The intent is a \textit{specification} (\eg ``all replicas of service X should have a breakpoint inserted at line 42'') and \sysname continuously enforces it.

For frameworks that support computation migration (Nu~\cite{ruanNuAchievingMicrosecondScale2023}, Quicksand~\cite{quicksand, quicksand-nsdi25}, ServiceWeaver~\cite{ghemawatmoderndevelopmentcloud2023}), breakpoints follow the computation automatically: because the control plane targets logical scopes rather than physical processes, the receiving process simply evaluates the invariant and applies the breakpoint locally.
Heap state requires additional handling, as a migrated caller's heap may no longer reside on the original process; 
\sysname can be extended to support these frameworks by supplying custom logic that locates and restores heap segments whenever the execution is paused~(\apdxref{app:heap_restore}).

We note that \sysname preserves intents across process churn, but it does not preserve a process's runtime state across restarts.
Once a process terminates, its heap and stack state is lost, and \sysname cannot restore it on the restarted process.
For example, if a callee handler holds the caller context of a process that has since restarted, a distributed backtrace from that handler is truncated because the original caller process no longer exists.

\parab{Guarantees.}
\textit{Invariant enforcement:} the control plane maintains the invariant that an intent is active on a process if and only if the process matches the intent's scope.
When an intent is created, the command completes only after every current in-scope process acknowledges installation.
When a process joins, it is held at an attach-time barrier until all pending in-scope intents are installed, so a matching process never runs without its intents.
\textit{Minimal developer expression:} developers operate at source-file and logical-group granularity, abstracted from physical process churn or state migration.
\textit{Composability:} intents compose orthogonally with \sysname's other primitives; a developer can specify an intent, allow the system to continuously enforce it across churn, and subsequently issue commands like \texttt{dbt} against the paused state without managing the underlying topology.

\subsection{Pause-Erased Time (PET)}
\label{sec:model:pet}

\parab{Problem.}
Every debugger-induced pause---breakpoint hit, single-step, manual interrupt---advances the system clock while the application is stopped.
From the application's perspective, time jumps by the pause duration.
Distributed applications fundamentally rely on these strict physical timing invariants for failure detection: a missed heartbeat triggers a leader election~\cite{ongaro2014search}, a late RPC triggers a retry storm, and an expired lease triggers client disconnection.
A developer pausing a process for ten seconds triggers every timeout below that threshold, forcing the distributed system into recovery procedures that permanently alter the state and destroy the intended debug flow.
Because of this coupling, pause-based interactive debuggers have historically been considered impractical in distributed contexts~\cite{beschastnikh2016debugging, acharGoTchaInteractiveDebugger2019}.

\begin{table}[]
\centering
\resizebox{\columnwidth}{!}{%
\begin{tabular}{@{}lll@{}}
\toprule
\textbf{Semantic} & \textbf{Description}            & \textbf{Example POSIX APIs}                                \\ \midrule
\texttt{get\_time}    & Get the current timestamp         & \begin{tabular}[c]{@{}l@{}}\texttt{clock\_gettime} \\ \texttt{gettimeofday} \end{tabular}   \\ \midrule
\texttt{sleep\_until} & Sleep until the absolute deadline & \begin{tabular}[c]{@{}l@{}}\texttt{pthread\_cond\_timedwait}\\ \texttt{sem\_timedwait}\end{tabular} \\ \midrule
\texttt{sleep}             & Sleep for the relative duration & \begin{tabular}[c]{@{}l@{}}\texttt{sleep}\\ \texttt{nanosleep}\end{tabular} \\ \bottomrule
\end{tabular}%
}
\caption{The Pause-Erased Time (PET) layer provides three semantics; each covers a subset of POSIX time APIs.}
\label{tab:pet-semantic}
\end{table}

\parab{Abstraction.}
\underline{P}ause-\underline{E}rased \underline{T}ime (PET) gives each debugged process a pause-erased view of time: a clock in which all debugger pauses are invisible.
Timestamps and timer waits operate against this clock.
From the application's perspective, it was never paused, thus mitigating the timeout cascades caused by debugger-induced execution pauses.
PET covers three broad categories of time semantics that implement a large portion of timeout and timer logic observed in distributed systems: reading the current timestamp (\eg \texttt{get\_time}), sleeping until an absolute deadline (\eg \texttt{sleep\_until(deadline)}), and sleeping for a relative duration (\eg \texttt{sleep(duration)}).
These semantic categories map to a wide range of underlying POSIX APIs, as demonstrated in \tabref{tab:pet-semantic}.

\parab{Pause-offset Accumulation.}
\sysname records a timestamp when a given debuggee process is paused and another when it is resumed.
The difference is the measured pause duration.
\sysname accumulates these into a running cumulative pause offset.
Before resuming execution, \sysname publishes this offset to a local shim layer, which sits between the user-level process and the \texttt{libc} library to intercept time API calls.
% (user-level process $\rightarrow$ interception layer $\rightarrow$  .
This shim layer intercepts POSIX time API calls and subtracts the cumulative offset from the \texttt{libc} and kernel's return value, erasing all pauses from the process's view of time.

\parab{Virtual Deadline Enforcement.}
The offset mechanism handles \texttt{get\_time} correctly, but a subtler problem arises for processes that are \textit{sleeping} when the debugger pauses them.
On resume, the kernel observes that the real-time deadline has passed and wakes the thread immediately---before the pause-adjusted deadline has been reached.
This \textit{premature wakeup} can produce spurious timeouts.

PET prevents this with the \textit{Virtual Deadline Enforcement} mechanism.
The shim layer tracks the pause-adjusted absolute deadline for every active timer wait.
On every return from the underlying \texttt{libc} sleep primitive---whether a genuine wakeup, a signal interruption, or a spurious wakeup from a debugger resume---the shim layer reads the current PET.
If the current PET is earlier than the adjusted deadline, the deadline has not been reached in pause-erased time; the shim layer re-arms the wait for the remaining duration and returns to the kernel sleep.
Application code does not execute until the PET deadline is genuinely reached.
The shim layer re-traps the thread whenever a resume would wake it prematurely.

Virtual Deadline Enforcement is a fundamental requirement for interactive distributed debugging.
Without it, dynamically erasing pauses would successfully adjust \texttt{get\_time} but fail to prevent timeout cascades triggered by in-flight sleeps. Unlike virtual clocks in discrete event simulation~\cite{jefferson1985virtual, fujimoto1990parallel} or chaos engineering~\cite{hommelWolfcwLibfaketime2024, kingsbury2025jepsenio, barr2014tardis, king2005debugging}---which use static offsets or state rollbacks---PET handles a fundamentally different temporal anomaly: the application state remains static while real time progresses, requiring dynamic offset growth upon every pause.

\parab{Guarantees.}
PET satisfies two invariants; the formal correctness proofs are detailed in~\apdxref{app:pet-prove}.
\begin{itemize}
    \item
          \textit{Monotonicity.} PET never decreases.
          Two consecutive \texttt{get\_time} calls return non-decreasing values regardless of intervening pauses.

    \item
          \textit{Timer correctness.} Time-blocking operations (both absolute deadlines via \texttt{sleep\_until} and relative durations via \texttt{sleep}) return only when PET reaches or exceeds their target.
          Any pause during the sleep, and any premature kernel wakeup from a debugger resume, causes the shim to re-arm.
          Application post-sleep code executes only when the intended PET duration has genuinely passed.
\end{itemize}

These invariants ensure that application logic that is correct with respect to real time remains correct with respect to PET.
Timeouts fire when and only when the application intends, as if no debugger were attached.
PET works for both monotonic and wall-clock time APIs~(\apdxref{app:wall-clock}).

\parab{PET Effectiveness under Global Pause.}
We distinguish between three types of timeouts related to \sysname-attached processes: local timeouts, cross-process timeouts, and external timeouts.
Local timeouts depend entirely on local state (\eg a periodic timer fired every $N$ seconds) without relying on states from other components or services.
PET is unconditionally effective for local timeouts because it directly virtualizes the local clock to minimize execution-pause-induced time jumps.
In contrast, cross-process timeouts depend on communication with other attached processes, \eg a timer triggered when a process receives no response from a peer before a deadline, heartbeats between a leader and its followers, or a cache server waiting on a database reply.
For these cross-process timeouts, PET is effective only if the global pause is tightly synchronized: a pending timeout can fire spuriously only if the pause-time skew between the involved processes exceeds the timeout's remaining slack.
Therefore, to properly mitigate cross-process timeouts with PET, the communication delay between \sysname and attached processes must be bounded so that all attached processes are paused nearly simultaneously.
We state this as a deployment assumption rather than a property \sysname enforces: the control plane is expected to share a low-latency network with the attached processes, as is typical in the development and testing environments \sysname targets, so the pause skew remains well below common timeout values (\eg heartbeat intervals of hundreds of milliseconds).
Finally, external timeouts originate from uncaptured services outside the attached cluster; we discuss them in the limitations below.

\parab{Limitations.}
PET maintains virtual time purely within the boundaries of the attached cluster.
However, PET cannot virtualize the physical flow of time for external, true-time observers.
If the cluster interacts with external systems that rely on physical time---such as an uninstrumented cloud storage service enforcing a strict lease expiration, or a third-party API gateway---those external services will perceive the developer's pause as a genuine timeout.
Thus, \sysname's temporal guarantees are strictly intra-cluster: developers must either mock time-sensitive external dependencies or attach them to \sysname.
For example, when debugging client--server interactions, attaching the load-generating clients as well allows \sysname to pause the world and mask time jumps on both sides, so developers can debug the interaction without triggering unintended timeouts between the clients and the server.
\looseness=-1
\section{Implementation}
\label{sec:implementation}

We implement \sysname on the \texttt{x86\_64} architecture and Linux. We also add experimental support for \texttt{aarch64}.

\begin{figure}
    \centering
    \includegraphics[width=0.95\linewidth]{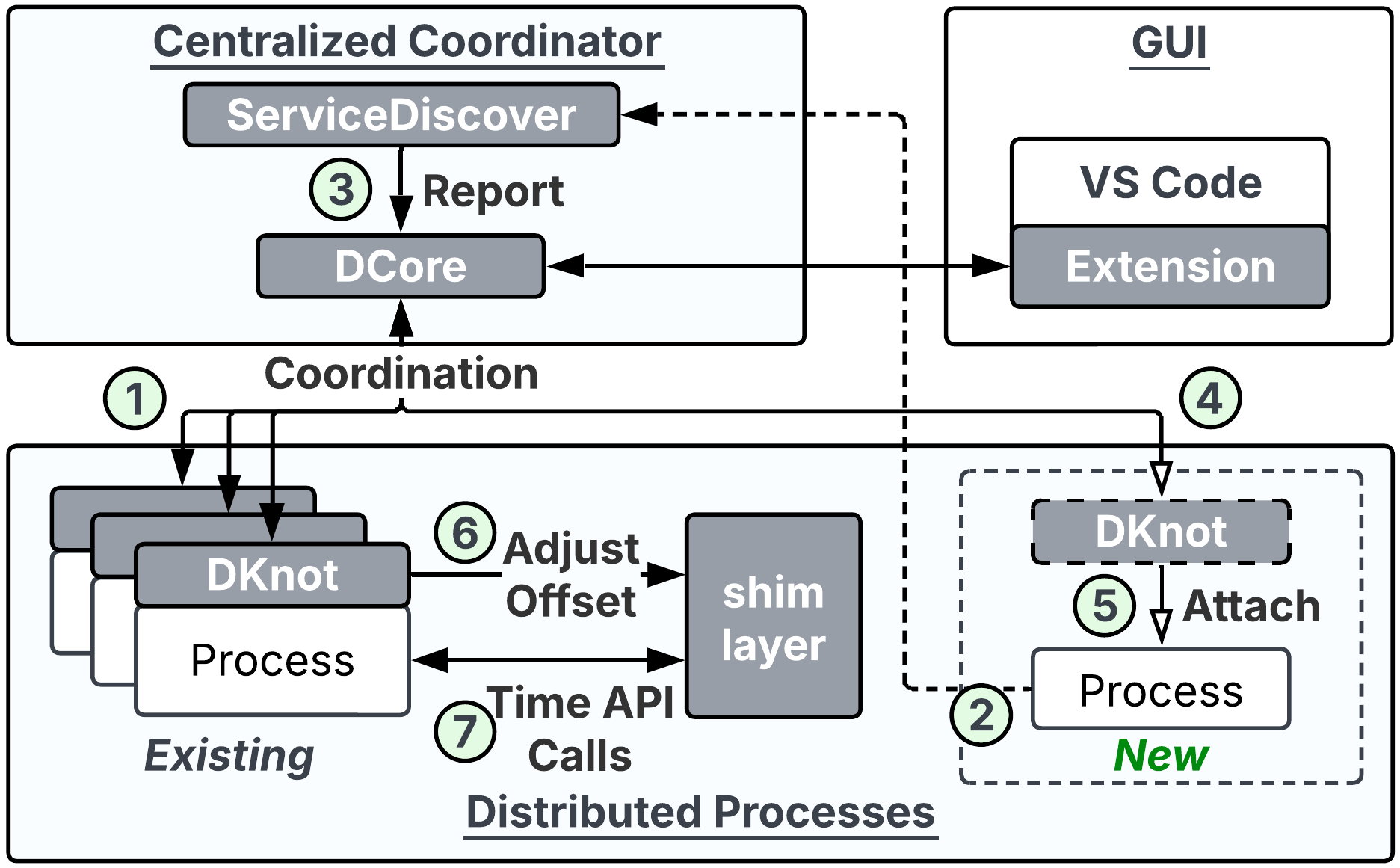}
    \caption{The design overview of \sysname. Components in gray are \sysname components.}
    \label{fig:ddb-overview}
\end{figure}

\parab{System Architecture.} \figref{fig:ddb-overview} illustrates \sysname's architecture, which comprises a centralized coordinator, a graphical frontend, and per-process distributed components.
The \textit{VS Code Extension} serves as the graphical user interface (GUI) for the developer.
This frontend communicates with the \textit{Centralized Coordinator}, which uses \textit{ServiceDiscover} to track dynamic process topologies and report them to \textit{DCore}.
\textit{DCore} acts as the central engine, orchestrating debugging sessions and distributing coordination messages to the relevant \textit{Distributed Processes}.
At the application layer, \sysname pairs each target process with a \textit{DKnot} agent (an extended version of GDB).
\textit{DKnot} attaches to the underlying process, relays debug-intents to it, and implements the \texttt{dbt}.

To prevent timeout cascades, \textit{DKnot} sends offset adjustments to a locally interposed shim layer.
This shim layer intercepts POSIX time API calls from the process to enforce Pause-Erased Time. 
The shim layer approach uses \texttt{LD\_PRELOAD} to interpose on top of the \texttt{libc} interfaces and works with vDSO transparently (\apdxref{app:vdso}).
Finally, a lightweight \textit{DConnector} integration library (not pictured) embeds caller-context metadata within RPC payloads.

\parab{Interaction Between Components.}
The lifecycle of a debugging session follows the numbered steps in \figref{fig:ddb-overview}.
During normal execution, DCore maintains continuous coordination (\circlednum{1}) with all active DKnot agents.
When a new process spins up (\circlednum{2}), ServiceDiscover detects its presence and reports (\circlednum{3}) to DCore updating the global topology.
DCore then immediately initializes a new DKnot instance (\circlednum{4}) which performs an attach (\circlednum{5}) to the new process, ensuring seamless debug-intent propagation across dynamic scaling.
When a process hits a breakpoint, DCore pauses the world and the DKnot agents calculate the precise timestamp of the pause time.
Upon resuming, each DKnot calculates the cumulative pause duration and publishes it as an offset to its local shim layer (\circlednum{6}).
The shim layer then intercepts all subsequent time API calls from the application (\circlednum{7}), seamlessly applying the offset to enforce the Pause-Erased Time abstraction.

% Please add the following required packages to your document preamble:
% \usepackage{booktabs}
% \usepackage{multirow}
% \usepackage{graphicx}
\begin{table}[]
    \centering
    % \resizebox{\columnwidth}{!}{%
    % \scriptsize
    % \footnotesize
    \small
    \begin{tabular}{@{}lrrc@{}}
        \toprule
                                              & \textbf{Component} & \textbf{LoC}               & \textbf{Language}    \\ \midrule
        \multirow[t]{4}{*}{DDB}               & DCore              & 14,867                     & Rust                 \\
                                              & DKnot              & 1,401                      & Python               \\
                                              & DConnector         & 1,954                      & C/C++/Go             \\
                                              & Adapter            & $\approx$700               & TypeScript           \\ \midrule
        \multirow[t]{3}{*}{Framework Support} & gRPC               & $\approx$20                & C++                  \\
                                              & Nu                 & $\approx$30                & C++                  \\
                                              & Quicksand          & $\approx$60                & C++                  \\
                                              & ServiceWeaver      & $\approx$10                & Go                   \\ \midrule
        \multirow[t]{2}{*}{Ported Apps}
        % & raft          & 1,735                      & C++                  \\
                                              & socialnet          & 3,154                      & Go                   \\ \midrule
        % Total                        &               & \multicolumn{1}{l}{17,207} & \multicolumn{1}{l}{} \\ \midrule
        Total                                 &                    & \multicolumn{1}{l}{22,196} & \multicolumn{1}{l}{} \\ \midrule
    \end{tabular}%
    % }
    \caption{LoC for DDB's components, excluding \texttt{libfaketime}. Calculated by \texttt{cloc} \cite{aldanialaldanialcloc2025}, excluding blanks and comments.}
    \label{tab:codebase}
\end{table}

\parab{Codebase.}
To evaluate our realization, we added DDB support for four frameworks, gRPC \cite{GRPC}, Nu \cite{ruanNuAchievingMicrosecondScale2023}, Quicksand \cite{quicksand-nsdi25}, and ServiceWeaver \cite{ghemawatmoderndevelopmentcloud2023}.
We summarize the LoC in \tabref{tab:codebase}.
To facilitate the evaluation of ServiceWeaver-based applications, we ported \texttt{socialnet} from DeathStarBench \cite{ganOpenSourceBenchmarkSuite2019} to ServiceWeaver.

The modest integration effort reflects \sysname's design philosophy: the communication framework layer is the ideal interposition point.
Integrating \sysname's caller-context embedding within the framework's payload handling path guarantees transparent cross-process stack reconstruction without heavy application-level instrumentation.
This one-time, minimal integration cost (under 60 LoC per framework) instantly unlocks interactive distributed debugging for all downstream application code.
We defer the discussion of more implementation-specific aspects to~\apdxref{app:impl-nits}.

\subsection{Limitations}
\parab{Programming Language Support.} Our prototype currently assumes GDB as the underlying debugger. Therefore, our prototype does not work for binaries and programming languages that GDB cannot interpret and manage.

\parab{PET.} In our realization, DDB only provides PET on top of POSIX time APIs. Thus, it does not work in applications that directly use raw \texttt{rdtsc} or NIC hardware timestamp. However, three API semantics and two invariants ensure our techniques are transferable to other cases.
PET's core mechanism is independent of the time source; 
only the interposition point differs. 
For example, as \texttt{rdtsc} cannot be intercepted via \texttt{LD\_PRELOAD}, supporting it requires developers to route rdtsc through a PET wrapper. 
The wrapper can be implemented as a real rdtsc instruction offset by a global cumulative counter. 
We measured the worst-case time for the first wrapper call after the execution resume to be 70 ns, and successive calls have no measurable overhead (16 ns with or without wrapper). 

\parab{In-flight Packets During Execution Pauses.}
\sysname relies on the host OS's TCP receive buffers to queue in-flight packets during a pause, avoiding complex infrastructure-level checkpointing~\cite{burtsev2009transparent}.
Thanks to PET, the application processes these buffered packets upon resumption exactly as if they arrived over a congested link.
Crucially, under \sysname's pause-the-world behavior, clients and upstream services are also paused, halting transmission.
Thus, TCP buffers trivially absorb the in-flight traffic without exhausting capacity, provided the cluster is not exposed to unmanaged, continuous external traffic.

\section{Evaluation}
\label{sec:eval}

We evaluate \sysname to answer the following key questions:

\begin{enumerate}
    \item How does \sysname's integration effort compare to existing tools? (\secref{sec:user-study})
    \item Does \sysname improve diagnostic reliability and efficiency over existing tools? (\secref{sec:user-study})
    \item Is \sysname responsive enough for interactive debugging at distributed scale? (\secref{sec:responsiveness})
    \item Does PET effectively prevent timeout cascades? (\secref{sec:time-drifts})
    \item What is the runtime overhead of the DDB distributed control plane? (\secref{sec:ddb_overhead})
\end{enumerate}

\parab{Setup.} We evaluate DDB responsiveness on a cluster of 24 physical servers hosted on Chameleon~\cite{keahey2020lessons}. The servers are \texttt{compute\_cascadelake\_r} instances located at the CHI@TACC site. Each server has an Intel Xeon Gold 6240R (24 cores, 2.4GHz), 187GB RAM, and Broadcom BCM57414 NIC. We evaluate DDB's PET effectiveness in minimizing time gaps on our local server, which has Intel Xeon Gold 5420+ (28 cores, 2.00 GHz) and 256GB RAM. We measure the DDB metadata embedding and attachment overhead on a cluster of 12 physical servers hosted on CloudLab~\cite{Duplyakin+:ATC19}. These servers are \texttt{c6525-25g} instances (16-core AMD 7302P at 3.00GHz, 128GB RAM, Mellanox ConnectX-5 NIC).
\looseness=-1

\parab{Evaluation Scope.}
Because our prototype is currently tightly coupled with GDB, Java-based systems are outside the scope of our evaluation.
Supporting Java-based systems would require \sysname to integrate a Java-compatible debugger, such as \texttt{jdb}, as its backend.
We also found relatively few open-source distributed systems that are implemented in C or C++ and use gRPC as their primary inter-component communication framework, further limiting the range of suitable evaluation targets.

\parab{Applications.}
Due to the evaluation scope constraint, we evaluate \sysname with applications and frameworks that are readily available.
To measure responsiveness and overhead across our target research frameworks (ServiceWeaver, Nu, Quicksand), we use the \texttt{socialnet} microservice benchmark as a standardized baseline.
For gRPC, we evaluate a C++ Raft consensus implementation, which is derived from a course lab assignment and is representative of internal cluster communication.
Finally, we isolate PET effectiveness using a synthetic application (\secref{sec:time-drifts}) paired with a survey of real-world timeout thresholds.

\subsection{User Study}
\label{sec:user-study}

\begin{figure}
    \centering
    \includegraphics[width=\linewidth]{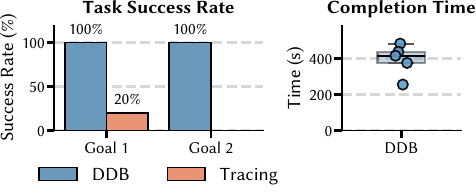}
    \caption{\textbf{Left:} Integration success rates for \sysname vs. OpenTelemetry, where success requires viewing linear (G1) and concurrent (G2) caller--callee traces. \textbf{Right:} Completion time for successful \sysname integrations (both G1 and G2). OpenTelemetry is not shown as no participant finished the full integration within the 20-minute time limit.
    }
    \label{fig:user-study-integration}
\end{figure}

To evaluate \sysname's ease of adoption and its impact on developer velocity, we conducted a two-part controlled user study.
The study quantifies the gap between \sysname and the status quo of debugging tools for distributed applications across two dimensions: the effort required to integrate the tools, and the diagnostic efficacy of \sysname during real debugging tasks.
Prior to the main study, we piloted our methodology by recruiting one volunteer for each study to conduct a trial-run, allowing us to refine the tasks and improve the clarity of the study materials.

\parab{Study 1: Integration Effort.}
The first study evaluates the operational overhead of deploying distributed debugging capabilities.
We recruited 5 participants for this study, comprising graduate researchers from the ECE and CS departments with distributed systems experience.
Each participant was asked to integrate \sysname and OpenTelemetry into a C++ codebase using official
documentation.
Each tool integration task was allocated 20 minutes.
A successful integration required two goals: tracing execution backward from callee to
caller with local state visible (G1), and observing a concurrent fan-out pattern (G2).
\sysname's integration inherently satisfies both goals, whereas OpenTelemetry requires
distinct, context-specific instrumentation for each.
As shown in \figref{fig:user-study-integration}, all 5 participants completed both goals
with \sysname (median 400 seconds), while only 1 completed G1 with OpenTelemetry and
none completed G2.

\begin{figure}
    \centering
    \includegraphics[width=\linewidth]{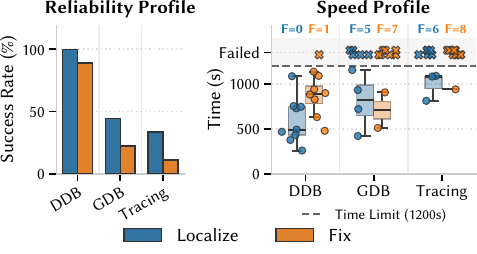}
    \caption{Aggregate reliability and diagnostic speed profiles across all trials.
        \sysname reliably enabled participants to localize the root cause in 100\% of trials, whereas baseline tools failed in the majority of cases and frequently reached the 1200-second timeout limit.
        Note that baseline boxplots may appear compact or low in certain cases; this reflects survivorship bias, as these distributions include only the few participants who localized the fault, while the majority timed out and are placed as failed cases.
    }
    \vspace{0.5em}
    \label{fig:debug-cases-1}
\end{figure}

\parab{Study 2: Diagnostic Efficacy.}
The second study evaluated diagnostic efficiency and reliability.
This study employed a within-subjects (repeated measures) design to compare \sysname against GDB and OpenTelemetry (OTel).
A total of 9 participants were recruited for this study; each participant was required to solve three separate bug cases, utilizing a different tool for each case.
To mitigate potential order effects, learning effects, and fatigue, the presentation sequence of the tools was partially counterbalanced using a $3 \times 3$ Latin Square design: \sysname-GDB-OTel, GDB-OTel-\sysname, or OTel-\sysname-GDB.
% While this fractional design does not completely eliminate asymmetrical carryover effects, it sufficiently controls for primary order and fatigue factors given the study's constraints.
Across all conditions, participants had full access to structured logging (\texttt{spdlog}~\cite{melman2026gabime}) and a traffic-replay utility (\texttt{grpcreplay}~\cite{vearne2026vearne}), so the OTel baseline effectively represents the modern status quo: using a distributed tracer to identify the failing node, followed by iterative logging to find the root cause.
For each task, we recorded the success rate and the total time required to localize and fix the bug, capped at a 1200-second (20-minute) time limit.

\parae{Methodology.}
Comparing \sysname with GDB demonstrates the usability benefits of a unified control plane and evaluates the effectiveness of PET in masking unintended timeouts that would otherwise disrupt the debugging workflow.
Comparing \sysname with OTel evaluates \sysname's debugging efficiency by contrasting interactive, source-level debugging with an iterative logging-based approach.

\parae{Case Selection Criteria.}
The three test cases were selected to highlight different diagnostic challenges:
\begin{inparaenum}[(1)]
    \item
    A \textit{segmentation fault} where an out-of-bounds memory access is non-deterministically triggered across two out of three Raft processes.
    \item
    A \textit{logic error} within the Raft consensus implementation, where a stale leader fails to correctly update its term after recovering from a network partition.
    \item
    A \textit{distributed deadlock} triggered by concurrent leader elections that paralyze cluster operations.
\end{inparaenum}
All three cases are organic, drawn from course teaching staff implementations.
All three require reasoning across multiple Raft processes (three to five nodes) and involve timing-sensitive protocol behavior, making them representative of cross-service faults.
All three study cases are reproducible. In two cases (Case 1 and 2), however, the specific node on which the failure manifests is nondeterministic. \sysname addresses this uncertainty through all-replica breakpoints and pause-the-world execution, allowing the failure to be captured regardless of which replica exhibits it.
\looseness=-1

\figref{fig:debug-cases-1} summarizes the reliability and speed profiles across all trials.
\sysname enabled participants to localize the root cause in 100\% of trials and fix the bug in 89\% within the time limit.
In contrast, GDB achieved only 44\% localization and 22\% fix rates; OTel achieved 33\% and 11\%, respectively.
For successful trials, \sysname also demonstrated high efficiency.
Participants using \sysname attained a median localization time of approximately 500 seconds across all bug cases.
The case-by-case breakdown (\figref{fig:debug-cases-2}) confirms this consistency: in Cases 1 and 3, baseline tools timed out for most participants, while in Case 2, participants using baseline tools required substantially more time than those using \sysname.
Overall, these results indicate that \sysname accelerates distributed debugging and makes complex diagnostic tasks easier to reason about.

\begin{figure}
    \centering
    \includegraphics[width=\linewidth]{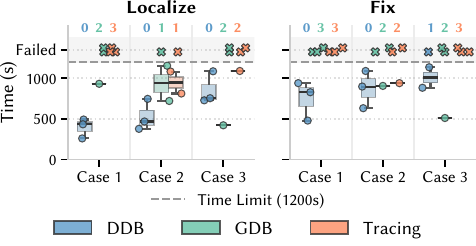}
    \caption{Case-by-case comparison of time-to-localize and time-to-fix across different bug cases.
        Case 1: Random Segmentation Fault; Case 2: Logic Error; Case 3: Distributed Deadlock.
        The GDB outlier in Case 3 corresponds to a participant who resolved the fault based on prior intuition rather than systematic tool-guided diagnosis.
    }
    % For complex cross-process scenarios (Cases 1 and 3), baseline tools experienced severe timeouts.
    % Even in less severe scenarios (Case 2), \sysname consistently accelerated diagnosis workflows.}
    \label{fig:debug-cases-2}
\end{figure}

\parae{Qualitative Analysis.}
Analysis of screen recordings of participant debugging procedures revealed two recurring patterns behind baseline failures.
Participants assigned to OTel spent a disproportionate share of their session performing the instrumentation and switching between the trace visualizer and log collector, rather than performing diagnostic reasoning.
This instrumentation and navigation overhead consumed substantial time and, in many cases, participants exceeded the time limit.
GDB was effective for single-process faults but scaled poorly: participants managed three processes, but at five could no longer efficiently attach to and navigate the relevant threads.
In timing-sensitive cases (Cases 2 and 3), pausing a process for inspection triggered RPC timeouts from its callers, which invalidated the cluster state and forced participants to restart the debugging session from scratch.

\subsection{Responsiveness}
\label{sec:responsiveness}

\begin{table}[t]
\centering
\resizebox{\columnwidth}{!}{%
% \footnotesize
\small
\begin{tabular}{@{}rrrrrrr@{}}
\toprule
\multirow{2}{*}{\textbf{Command}} & \multirow{2}{*}{\textbf{\begin{tabular}[c]{@{}r@{}}Scale\\ (\# pods)\end{tabular}}} & \multicolumn{5}{c}{\textbf{Latency (ms)}} \\ \cmidrule(l){3-7} 
                          &     & \textbf{Count} & \textbf{Mean} & \textbf{Median} & \textbf{P95} & \textbf{P99} \\ \midrule
\multirow{3}{*}{\texttt{dbt}}      & 38  & 2,348          & 32.5          & 14.2            & 128.5        & 153.9        \\
                          & 62  & 4,303          & 32.1          & 13.9            & 144.5        & 156.3        \\
                          & 122 & 14,573         & 26.5          & 13.3            & 110.1        & 160.1        \\ \midrule
\multirow{3}{*}{\texttt{continue}} & 38  & 37             & 2.2           & 2               & 3            & 4.2          \\
                          & 62  & 8              & 2.8           & 2.7             & 3.6          & 3.6          \\
                          & 122 & 7              & 4.7           & 4.8             & 4.9          & 4.9          \\ \bottomrule
\end{tabular}%
}
\caption{Measured latencies for two commonly used commands, \texttt{dbt} and \texttt{continue}, under different cluster sizes. The \textbf{count} column shows the number of times the command is handled by DDB.}
\label{table:66_indi_cmds}
\end{table}

An interactive debugger must respond to user commands and reconstruct state fast enough to maintain the developer's train of thought at human timescales.
\tabref{table:66_indi_cmds} presents the end-to-end latency of two basic control plane operations---executing a distributed backtrace (\texttt{dbt}) and continuing execution---across varying cluster sizes (38, 62, and 122 processes).
Due to the space constraint, we present the full measurement of command handling latencies in~\apdxref{app:ext-cmd-handling}.

\sysname exhibits stable, interactive-grade latency across all tested scales.
The \texttt{continue} command consistently finishes under 5~ms because debug-intent distribution is heavily parallelized.
Crucially, evaluating \texttt{dbt} latency serves as a rigorous stress test of \sysname's responsiveness under heavy concurrent load.
When execution pauses globally, the VS Code frontend automatically issues a burst of \texttt{dbt} commands---one for every thread across every paused process---resulting in over 14.6\,k concurrent requests at the 122-pod scale.
Despite this sheer volume of requests triggering a costly, iterative stack-stitching algorithm, \texttt{dbt} latency remains under 200~ms even at the tail.

\parab{Call Depth.} Because distributed backtrace is an iterative algorithm, \texttt{dbt} handling latency is primarily bounded by call depth.
A longer call chain entails larger latency, as \sysname must cross more RPC boundaries, query more DKnot agents, and perform iterative stack restorations and frame stitching.
To evaluate this impact, \tabref{table:bt_remote_latency} isolates the end-to-end latency of a single \texttt{dbt} command as a function of RPC call depth.
Reconstructing the distributed stack takes a median of 18.2~ms
at a depth of two hops, and 36.1~ms at a depth of four hops. A
recent study~\cite{huye2023lifting} reports that typical call depths observed in
modern cloud applications average around 3 to 4. At these
depths, the sub-50\,ms \texttt{dbt} latency remains well within the
target threshold for a fluid interactive experience.
For long call chains, we discuss a mitigation in~\apdxref{sec:dbt-latency} that harnesses parallelism for a lower latency.

\subsection{PET Effectiveness}
\label{sec:time-drifts}

\label{sec:time-jump}

To evaluate the effectiveness of the PET mechanism, we answer two questions: how much overhead does the interception layer introduce, and how effectively does it virtualize time to prevent unintended timeouts?

% \begin{table}[]
% \small
% \centering
% \begin{tabular}{@{}crrrr@{}}
% \toprule
% \textbf{Hop Count} & \textbf{Mean} & \textbf{P50}  & \textbf{P95} & \textbf{P99}  \\
% \midrule
% 1 hops     & 116.1 & 104.3  & 169.9 & 212.4  \\
% 2 hops     & 178.6 & 158.6 & 269.1 & 340.1 \\
% 3 hops    & 233.1 & 215.8  & 311.6 & 372.8 \\
% \midrule
% \end{tabular}
% \caption{Distributed backtrace command latency (ms) with a different number of hops in calling chain.}
% \label{tab:backtrace-perf}
% \end{table}

\begin{table}[t]
    \centering
    % \footnotesize
    \small
    \begin{tabular}{@{}rrrrrr@{}}
        \toprule
            \multirow{2}{*}{\textbf{Call Depth}} & \multicolumn{5}{c}{\textbf{Latency (ms)}} \\ \cmidrule(l){2-6}
             & \textbf{Mean} & \textbf{Median} & \textbf{StdDev} & \textbf{P95} & \textbf{P99} \\ \midrule
            2  & 18.3   & 18.2   & 0.2 & 18.7  & 18.8   \\
            3  & 29.1   & 28.4   & 2.7 & 29.6  & 33.1   \\
            4  & 36.7   & 36.1   & 2.6 & 37.9  & 38.6   \\
            5  & 50.8   & 50.3   & 2.4 & 52.0  & 57.1   \\
            6  & 63.8   & 62.1   & 6.6 & 70.5  & 80.6   \\ \midrule
            10 & 107.1 & 105.2 & 5.0 & 116.4 & 120.0 \\ \bottomrule
    \end{tabular}%
    % }
    \caption{Latencies of \texttt{dbt} command with varying call depths.}

    \label{table:bt_remote_latency}
\end{table}

\begin{table}[t]
    \centering
    % \footnotesize
    \small
    % \resizebox{\columnwidth}{!}{%
    \begin{tabular}{@{}lrr@{}}
        \toprule
        \textbf{Time APIs}                                           &
        \begin{tabular}[t]{@{}r@{}}\textbf{w/ PET} (ns)\end{tabular} &
        \begin{tabular}[t]{@{}r@{}}\textbf{w/o PET} (ns)\end{tabular}          \\
        \midrule
        \texttt{chrono::system\_clock}                               & 91 & 30 \\
        \texttt{std::time}                             & 86               & 4                \\
        \texttt{gettimeofday}                                        & 95 & 29 \\ \midrule
        \texttt{clock\_gettime(REALTIME)}                            & 88 & 28 \\
        \texttt{clock\_gettime(MONOTONIC)}                           & 89 & 28 \\
        % \texttt{clock\_gettime(MONOTONIC\_RAW)} & 88               & 28                \\
        \bottomrule
    \end{tabular}%
    % }
    \caption{Average execution time for time APIs with or without the PET shim layer interception.}
    \label{tab:faketime_perf}
\end{table}

\begin{table*}[!t]
    \small
    \centering
    \caption*{\small \textbf{Legend:} \hspace{1mm} \circledplain{1} Harm Performance \hspace{1mm} \circledplain{2} Disrupt System States \hspace{1mm} \circledplain{3} Change System Behaviors}
    \begin{tabular}{@{}l l N@{\hspace{0.3em}} U l | c | c @{}}
        \toprule
        \textbf{System} & \textbf{Case}           & \multicolumn{2}{c}{\textbf{Threshold}} & \textbf{Aftermath} & \textbf{Effect Category}            & \textbf{Severity}                                                 \\ \midrule
        \multirow[t]{4}{*}{LogCabin~\cite{LogcabinLogcabin2024}}
                        & Heartbeat Timeout       & 250                                    & ms                 & Send excessive heartbeats           & \circledplain{1}                                       & low      \\
        % & CondVar Deadline & 500 & ms & Immature thread wakeups & \circledplain{1} & moderate \\
                        & Election Timeout        & 500                                    & ms                 & Disrupt established leadership      & \circledplain{1} + \circledplain{2}                    & moderate \\
                        & Client Session Expiry   & 3600                                   & s                  & Disconnect clients undesirably      & \circledplain{2} + \circledplain{3}                    & severe   \\
        \midrule
        \multirow[t]{4}{*}{RAMCloud~\cite{ousterhoutRAMCloudStorageSystem2015}}
                        & RPC Timeout             & 100\,-\,200                            & ms \(^*\)          & Send excessive RPCs                 & \circledplain{1}                                       & low      \\
                        & Lease Renewal Threshold & 900                                    & s                  & Force clients to renew the lease    & \circledplain{2}                                       & moderate \\
                        & Transaction Timeout     & \(N \times 50\)                        & ms \(^\dagger\)    & Abort the transaction               & \circledplain{1} + \circledplain{2} + \circledplain{3} & severe   \\
                        & Crash Recovery Trigger  & 250                                    & ms                 & Initiate large-scale crash recovery & \circledplain{1} + \circledplain{2} + \circledplain{3} & severe   \\
        \midrule
        % \bottomrule
        \multirow[t]{5}{*}{Nu~\cite{ruanNuAchievingMicrosecondScale2023}}
        % & Proclet sort interval   & 200                                    & ms                 & Recalculate proclet priorities      & \circledplain{2} + \circledplain{3}                    & moderate \\
        % & RPC Batch Timeout & 5 & $\mu$s & Force RPC batch transmission & \circledplain{1} & moderate \\
                        & Full Shard Probing      & 400                                    & ms                 & Probe full shards for freed memory  & \circledplain{1} + \circledplain{2}                    & moderate \\
                        & Connection Pool Timeout & 10                                     & s                  & Connections considered dead & \circledplain{1} + \circledplain{2} + \circledplain{3} & severe   \\
                        & Service Keepalive       & 10                                     & s                  & Connections considered idle & \circledplain{2} + \circledplain{3}                    & severe   \\ \bottomrule
    \end{tabular}%
    \caption*{\small \(^*\) Value is chosen randomly between 100\,-\,200\,ms. \hspace{1mm} \(^\dagger\) \(N\) is the number of participants involved in the transaction.}
    \vspace{-0.4em}
    \caption{Timeout thresholds used in real-world systems, with their aftermath, effect category, and severity.}
    \label{tab:survey_time}
\end{table*}

\parab{Overhead.}
We first measure the overhead of PET's interception by invoking common POSIX time APIs one million times.
Compared to direct bare-metal execution, the intercepted time APIs introduce only nanosecond-scale overhead per call as described in~\tabref{tab:faketime_perf}.
Furthermore, the routine that calculates and publishes the cumulative pause offset before a process resumes execution takes 1.265\,$\mu$s on average.
These negligible overheads confirm that the PET layer does not distort the normal execution performance of the debugged application.
\looseness=-1

\parab{Masking Time Jumps.}
To evaluate the effectiveness of PET at masking time jumps, we run a synthetic application and repeatedly retrieve timestamps using POSIX time APIs when PET is applied.
The synthetic application executes a tight loop with a light workload ($\approx$\,50~$\mu$s) in each iteration.
We attach \sysname to the synthetic application with PET enabled and repeatedly inject pause and resume commands while the application is running.
In this experiment, we fix the interval between pause and resume at 100~ms; during real-world interactive debugging, this interval can be arbitrarily large.
Because we tightly control the interval of each loop iteration to the microsecond scale, any noticeable time jumps at the millisecond scale are effectively captured.
We present the timeline perceived by the synthetic application in \figref{fig:time-gap}.
The application consistently perceives $\leq$~5~ms time jumps, which is clearly smaller than the 100~ms interval.
Because PET accumulates and subtracts the cumulative pause offset, we observe similar sub-5~ms results regardless of the actual duration between pause and resume.

\parab{Real-World Safety Margin.}
To contextualize this 5~ms maximum drift, we examined default timing thresholds in three distributed systems, LogCabin~\cite{LogcabinLogcabin2024}, RAMCloud~\cite{ousterhoutRAMCloudStorageSystem2015}, and Nu~\cite{ruanNuAchievingMicrosecondScale2023}.
As summarized in~\tabref{tab:survey_time}, these systems rely on timeouts to detect and handle failures, such as LogCabin's election timers and RAMCloud's fast crash recovery~\cite{ongaroFastCrashRecovery2011}.
Exceeding these thresholds triggers non-negligible side effects, ranging from degraded performance via excessive heartbeats to disrupted system states via forced leader step-downs or a massive crash recovery.
Our survey indicates that the most aggressive timeout thresholds in these systems typically range from 50~ms to 250~ms.
Because PET reliably bounds the perceived time jump to under 5~ms, it provides an order-of-magnitude safety envelope that isolates the application from debugger-induced timeout cascades.

\begin{figure}
    \centering
    \includegraphics[width=0.98\linewidth]{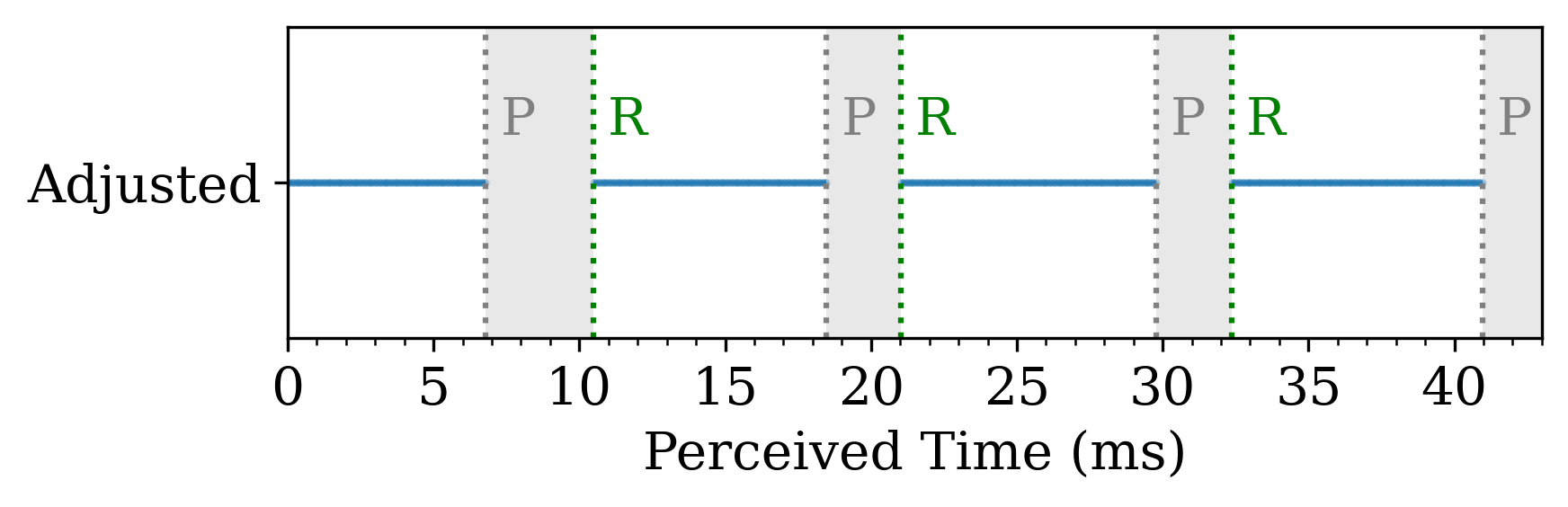}
    \caption{The application-perceived time (PET) with multiple pauses (\textcolor{gray}{P}) and resumes (\textcolor{darkgreen}{R}). The gray area shows the time jumps that the application perceives between an execution pause and resume.}
    \label{fig:time-gap}
\end{figure}

\subsection{DDB Runtime Overhead on Applications}
\label{sec:ddb_overhead}

While \sysname targets staging and test environments, its runtime impact must be comparable to a traditional single-process source-level debugger and incur no substantial overhead from the orchestration and management of distributed processes.
\figref{fig:ddb-overhead-tput} plots the retained throughput of the \texttt{socialnet} benchmark (on both the Nu and ServiceWeaver runtimes) and a gRPC-based Raft consensus implementation with and without \sysname attached.

\sysname introduces minimal throughput degradation on \texttt{socialnet}.
Specifically, \texttt{socialnet} running on ServiceWeaver and Nu exhibits a 0.6\% and 3.8\% throughput drop, respectively.
To establish a baseline for binary-level debugging overhead, we also compare \sysname against GDB.
Attaching GDB to the gRPC Raft service yields an 18.4\% throughput degradation.
Attaching \sysname to the same gRPC Raft service yields a 20.1\% throughput degradation.
This demonstrates that \sysname achieves distributed debugging capabilities with overhead comparable to a traditional single-process debugger.
The caller-context metadata embedding required for \texttt{dbt} adds only 52 bytes to each RPC payload, resulting in a minimal impact on network utilization.
\looseness=-1

\begin{figure}
    \centering
    \includegraphics[width=\columnwidth]{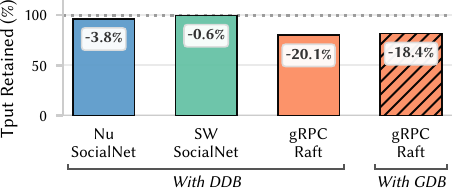}
    \caption{The performance overhead when DDB metadata is embedded, and DDB is attached across Nu's \texttt{socialnet}, ServiceWeaver (SW)'s \texttt{socialnet}, and gRPC-based Raft consensus implementation.}
    \label{fig:ddb-overhead-tput}
\end{figure}

\section{Related Work}

\parab{Formal methods and model checking.} Formal verification and model checking~\cite{wilcoxVerdiFrameworkImplementing2015, newcombeHowAmazonWeb2015, hawblitzelIronFleetProvingPractical2015} provide strong correctness guarantees for protocol specifications but operate on mathematical models, not running code.
Verification-guided tests are effective~\cite{wangModelCheckingGuided2023, yangmodisttransparentmodel2009}, but they require substantial expertise.
Runtime verification systems~\cite{alpernasCloudScaleRuntimeVerification2021, liuD3SDebuggingDeployed2008} check specifications against live executions but do not enable interactive inspection or modification of the state that triggered a violation.
\looseness=-1

\parab{Record and Replay.} Capturing and reproducing non-determinism are crucial in distributed applications~\cite{kharbandaAlwaysRecordingFrameworkServerless2023, ronsseRecPlayFullyIntegrated1999, zuoExecutionReconstructionHarnessing2021}. 
Some work~\cite{geelsReplayDebuggingDistributed2006, geelsFridayGlobalComprehension2007} also explores replay interfaces that resemble a traditional symbolic debugger, \eg GDB.
Record-and-replay is orthogonal to \sysname: it targets non-deterministic bugs by replaying a recorded execution, whereas \sysname operates on live, real-time execution.
\looseness=-1

\parab{Distributed logging, tracing, and monitoring.} These frameworks enhance the logging and reasoning experience by automatically aggregating logs~\cite{azuremonitorlogs2024, CloudLoggingGoogle, LogDeviceDistributedData2017}, establishing causality between cross-service interactions and improving observability~\cite{ashokTraceWeaverDistributedRequest2024, shenNetworkCentricDistributedTracing2023, macePivotTracingDynamic2015, openzipkindistributedtracing, jaeger, sigelmandapper, OpenTelemetry}, and enabling tracing-based aftermath-analysis to reveal erratic symptoms~\cite{zhangbenefithindsighttracing2023, huangTprofPerformanceProfiling2021, kaldorCanopyEndtoEndPerformance2017, gan2021sage}. Monitoring tools~\cite{grafanaopencomposable, rabensteinprometheusnextgenerationmonitoring2015} can also provide high-level metrics to indicate potential issues.
While these tools are effective at identifying \textit{which} service is behaving anomalously or reconstructing the sequence of events around a failure~\cite{dogga2021revelio, wittkopp2024logrca}, they do not provide source-level interactive debugging.
\sysname addresses a different need: once a developer knows \textit{where} a problem occurs, they need to understand \textit{why} by pausing execution, navigating the cross-service call chain, and inspecting live runtime state.
Unlike tracing, \sysname's metadata is not sampled and produces no persistent records; it is interrogated on demand.

\parab{HPC/Distributed debuggers.} TotalView~\cite{totalviewdebuggerhpc} and Linaro DDT~\cite{linaroddt} are interactive debuggers for HPC applications that support multi-process debugging at scale.
Although they provide mechanisms for managing breakpoints across MPI processes, they are designed for HPC's homogeneous execution model: they do not reconstruct call stacks across RPC boundaries, and they offer no protection against timeout interference from debugger pauses~\cite{zhou2021fault}. Both are also proprietary.
Ray~\cite{moritzraydistributedframework2018} features a built-in interactive source-level debugger~\cite{raydebugger} that provides a single-point view of distributed tasks within the Ray framework.
However, Ray's debugger does not offer distributed backtrace and still requires developers to manually cross-reference call stacks across process boundaries.
\looseness=-1

\parab{Classic distributed debugging.}
Early research in distributed debugging focused heavily on algorithms for consistent global snapshots~\cite{chandy1985distributed}, distributed halting~\cite{miller1988breakpoints}, and determining causal ordering of asynchronous message passing~\cite{fowler1990causal}.
However, modern cloud-native architectures have shifted the problem space.
Today's systems are characterized by deep RPC chains~\cite{sigelmandapper}, dynamically scaling replica sets~\cite{gan2021sage}, and strict timeout-driven availability requirements~\cite{dean2013tail}.
Where classic systems focused on theoretically consistent state spaces for asynchronous events, \sysname addresses practical developer workflow bottlenecks by providing an intent-preserving control plane that manages debug state across ephemeral infrastructure.

\section{Discussion}

\parab{Debugger backend.}
As of this writing, \sysname has been thoroughly tested with GDB, and we have implemented experimental support for LLDB.
Adding a new backend primarily requires \sysname to parse and interpret the backend's commands, responses, and execution states through its native communication interface (\eg GDB's Machine Interface).
Because \sysname relies on a small set of custom debugger commands to implement distributed backtraces and pause-erased time, the target debugger must also provide sufficient extensibility, as GDB does through Python scripting.

\parab{Architecture assumptions.}
\sysname's distributed backtrace relies on architecture-specific register layouts and calling conventions to reconstruct call stacks.
The other functionality of \sysname is not tied to a specific architecture, provided that the debugger backend supports that architecture.
We have implemented and validated our stack-restoration approach on \texttt{aarch64} as well, with further implementation details in Appendix~\secref{app:impl-nits}.

\parab{OS assumptions.}
The current PET implementation assumes conventional kernel-level semantics for POSIX-compliant time APIs, so its correctness may depend on the target kernel: during prototyping, we found that several time APIs on macOS (Darwin) differ from their Linux counterparts despite sharing the same signatures.

\parab{When DDB helps and when it does not.}
DDB is particularly useful in early development and prototyping, where the instrumentation is itself the bottleneck, as our user study shows.
It complements logs and traces: once those surface a clue, the developer can inspect the suspected code paths from a chosen entrypoint, not by wading through large volumes of data.
Its closer counterpart is the iterative log-and-redeploy loop developers use to understand why a reproducible failure occurs.
Reproducing distributed failures is genuinely hard, and \sysname does not try to solve it; for non-reproducible production incidents in hyperscaler-operated systems, observability tools and record-replay remain the right tools.
\section{Conclusion}

Interactive debugging enables developers to pause execution, inspect runtime state, and navigate call stacks at the source level, yet it has been widely treated as impractical for distributed systems.
We present \sysname, a distributed interactive debugger that restores this tight hypothesis-testing loop through three user-space mechanisms---Distributed Backtrace, an intent-preserving control plane, and Pause-Erased Time---without requiring kernel modifications.

In a controlled user study, \sysname achieves 100\% fault localization, compared to 38.5\% for baseline tools.
It delivers 30~ms median backtrace latency and 1--5\% throughput overhead, and requires only 10--60 lines of per-framework integration.

\begin{acks}
We thank our shepherd and reviewers for their invaluable feedback. 
We thank the members of the USC Networked Systems Lab (NSL) for their insightful discussions.
We appreciate the participants of the user study for their time.
Results presented in this paper were partially obtained using the Chameleon and CloudLab testbeds supported by the National Science Foundation.
This work was funded in part by a National Research Foundation of Korea (NRF) grant (RS-2024-00354947).
\end{acks}

%% ============================================================
%% Bibliography (auto-selects the right style)
%% ============================================================
\clearpage
\newpage
\ifusenix
  \bibliographystyle{plain}
\else
  \bibliographystyle{templates/acmart/ACM-Reference-Format-No-Year}
\fi
\bibliography{1_main,references}

\clearpage
\appendix

\section*{Appendix}

\noindent
Following the practice at SOSP'26, we declare that the appendix and included materials were not peer reviewed by the SOSP committee.

\section{PET Mechanism}
\label{app:pet}

\subsection{Formal Description and Invariants Correctness}
\label{app:pet-prove}

We formalize the Pause-Erased Time (PET) abstraction and prove two invariants that guarantee PET induces no incorrect timing behavior.
Let $t \in \mathbb{R}_{\ge0}$ denote the continuous absolute physical time.
Suppose the debugger suspends the target process multiple times during a debugging session.
We represent these debugger-induced pauses as a sequence of disjoint, half-open physical time intervals $G_k = [p_k, r_k)$.
Here, $p_k$ and $r_k$ are the exact physical times on the $t$ continuum when the process pauses and resumes, respectively.
During each pause interval $G_k$, the process executes no user code and makes no time API calls.
To measure the duration of these pauses from within the process, we rely on the operating system's timekeeping.
Let $T(t)$ denote the value of the kernel's monotonic clock at any given physical time $t$.
The shim layer measures the duration of each pause $G_k$ by recording two internal timestamps from this kernel clock $T$.
Specifically, it records $t_{\text{stop}}^{(k)}$ immediately after all threads are stopped, and $t_{\text{resume}}^{(k)}$ immediately before execution resumes.
Because these internal timestamp reads inevitably occur sequentially within the physical pause interval $[p_k, r_k)$, their recorded values are strictly bounded by the true kernel clock values at the interval's boundaries: $T(p_k) \le t_{\text{stop}}^{(k)} < t_{\text{resume}}^{(k)} \le T(r_k)$.
The shim layer defines the measured pause duration $\Delta_k$ for the interval $G_k$ as the difference between these two internal timestamps.
\[
\Delta_k \;=\; t_{\text{resume}}^{(k)} - t_{\text{stop}}^{(k)} \;\in\; [0,\; T(r_k) - T(p_k)].
\]
As the process resumes at physical time $r_k$, the shim layer adds $\Delta_k$ to a global cumulative offset $O(t)$.
Formally, we define $O(t) = \sum_{k:\, r_k\le t} \Delta_k$.
When the application subsequently calls a time API, the shim layer intercepts the call and returns the virtualized time $T_v(t) = T(t) - O(t)$.
This mechanism implements three time-related semantics:
\begin{inparaenum}[(i)]
    \item \texttt{get\_time} directly returns $T_v(t)$.
    \item \texttt{sleep\_until}$(D_v)$ safely blocks the thread and returns at the earliest physical time $t^\star$ such that $T_v(t^\star) \ge D_v$.
    To achieve this, the shim layer repeatedly arms the kernel's blocking system call with an absolute physical deadline $D_{\text{real}} = D_v + O(t_{\text{now}})$.
    Whenever the kernel wakes the thread prematurely---either due to the offset $O$ increasing from an intervening pause or a spurious signal---the shim layer intercepts the wakeup, recalculates $D_{\text{real}}$, and re-arms the sleep until the virtual condition holds.
    \item \texttt{sleep}$(D)$ is syntactic sugar for sleeping until a relative duration elapses, implemented as \texttt{sleep\_until}$(T_v(t_{\text{now}}) + D)$.
\end{inparaenum}

\begin{lemma}[Cumulative offset property]\label{lem:offset-diff}
For any physical times $t_1 < t_2$, the change in the cumulative offset is exactly the sum of measured pause durations that concluded within that interval.
Specifically, $O(t_2) - O(t_1) = \sum_{k:\, t_1 < r_k \le t_2} \Delta_k$.
\end{lemma}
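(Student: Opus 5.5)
This follows directly from the definition $O(t) = \sum_{k:\, r_k \le t} \Delta_k$. We split the index set at $t_1$.

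Fix $t_1 < t_2$. The index set $\{k : r_k \le t_2\}$ is the disjoint union of $\{k : r_k \le t_1\}$ and $\{k : t_1 < r_k \le t_2\}$. Disjointness is immediate, since no $r_k$ can satisfy both $r_k \le t_1$ and $r_k > t_1$. Covering holds because $t_1 < t_2$: any $r_k \le t_2$ either satisfies $r_k \le t_1$ or lies in $(t_1, t_2]$. Splitting the sum defining $O(t_2)$ along this partition gives
\[
O(t_2) \;=\; \sum_{k:\, r_k \le t_1} \Delta_k \;+\; \sum_{k:\, t_1 < r_k \le t_2} \Delta_k \;=\; O(t_1) + \sum_{k:\, t_1 < r_k \le t_2} \Delta_k .
\]
Subtracting $O(t_1)$ from both sides yields the claim.

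The one point needing care is that these sums are well defined, so that regrouping is legitimate. Each term satisfies $\Delta_k \in [0, T(r_k) - T(p_k)]$, so all terms are nonnegative. Also, the intervals $G_k$ are disjoint and half-open, and a debugging session contains only finitely many pauses (or at least finitely many ending by any finite time). Hence each sum has finitely many nonnegative terms, and splitting it over a partition of its index set is valid. If one prefers not to assume finiteness, nonnegativity alone still permits rearranging and partitioning, with values in $[0,\infty]$. Finiteness of $O(t_1)$, which holds under the finite-pauses assumption, is then what allows the subtraction. I do not expect a genuine obstacle here; this well-definedness remark is the only step beyond bookkeeping.
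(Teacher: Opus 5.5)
Your proposal is correct and is essentially the paper's argument. The paper phrases it as $O$ being a right-continuous step function whose only jumps are $\Delta_k$ at $r_k$, so the difference picks up exactly the jumps with $r_k \in (t_1, t_2]$; your explicit split of the index set at $t_1$, together with your well-definedness remark about nonnegative, finitely many terms, is the same reasoning made precise.
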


\begin{proof}
By definition, $O(t)$ is a right-continuous step function that only increases by $\Delta_k$ precisely at $t = r_k$.
The function otherwise remains constant during all running intervals and pauses.
Therefore, the difference $O(t_2) - O(t_1)$ perfectly isolates and sums the increments of those steps where $r_k \in (t_1, t_2]$.
\end{proof}

\parae{Invariant 1: Monotonicity.}
\label{thm:monotonic}
The Pause-Erased Time never decreases.
Let two \texttt{get\_time} calls occur at physical times $t_1 < t_2$, returning $v_1 = T_v(t_1)$ and $v_2 = T_v(t_2)$.
Then $v_2 \ge v_1$.

\begin{proof}
Because the application is entirely suspended during any pause interval $G_k$, the observation times $t_1$ and $t_2$ must strictly fall within running intervals.
Consequently, any pause interval $G_k$ that intersects the window $(t_1, t_2]$ must be fully contained within it, implying $t_1 < p_k < r_k \le t_2$.
We can partition the timeframe $(t_1, t_2]$ into a collection of maximal running intervals $\mathcal{R}$ and a collection of paused intervals $\mathcal{P} = \{G_k \mid G_k \subseteq (t_1, t_2]\}$.
The total elapsed real time measured by the kernel clock decomposes over these intervals.
\[
\begin{aligned}
T(t_2) - T(t_1) 
    &= \sum_{I \in \mathcal{R}} \bigl[T(\max I) - T(\min I)\bigr] \\
    &\quad + \sum_{G_k \in \mathcal{P}} \bigl[T(r_k) - T(p_k)\bigr]
\end{aligned}
\]
We expand the difference in virtual time using Lemma~\ref{lem:offset-diff}.
\[
\begin{aligned}
v_2 - v_1 
    &= \bigl(T(t_2) - T(t_1)\bigr) - \bigl(O(t_2) - O(t_1)\bigr) \\
    &= \sum_{I \in \mathcal{R}} \bigl[T(\max I) - T(\min I)\bigr] \\
    &\quad + \sum_{G_k \in \mathcal{P}} \bigl[T(r_k) - T(p_k) - \Delta_k\bigr]
\end{aligned}
\]
The kernel's monotonic clock strictly non-decreases, making the first sum over the running intervals naturally non-negative.
For the second sum, earlier definitions establish that the measured pause $\Delta_k$ is bounded by the true interval duration, meaning $\Delta_k \le T(r_k) - T(p_k)$.
Because both sums are non-negative, the virtual time successfully preserves monotonicity such that $v_2 - v_1 \ge 0$.
\end{proof}

\parae{Invariant 2: Timer Correctness.}
\label{thm:timer-correctness}
Time-blocking operations unblock strictly when their virtual time constraints are satisfied.
Specifically, a call to \texttt{sleep\_until}$(D_v)$ returns at the earliest physical time $t^\star$ where $T_v(t^\star) \ge D_v$.
Consequently, a relative \texttt{sleep}$(D)$ issued at physical time $t_0$ returns at an unblocking instant $t^\star$ that guarantees $T_v(t^\star) - T_v(t_0) \ge D$.

\begin{proof}
When an application blocks using \texttt{sleep\_until}$(D_v)$, the shim layer arms the underlying kernel sleep primitive with the absolute physical deadline $D_{\text{real}} = D_v + O(t_{\text{now}})$.
The shim layer subsequently intercepts every wakeup.
It unconditionally returns control to the application only when the virtual time satisfies $T_v(t) \ge D_v$.
We must guarantee that the armed physical deadline $D_{\text{real}}$ mathematically expires no later than the true physical time when the virtual condition is satisfied.
Suppose at some subsequent physical time $t_{\text{target}}$, the virtual time reaches the deadline such that $T_v(t_{\text{target}}) \ge D_v$.
By definition, this implies $T(t_{\text{target}}) - O(t_{\text{target}}) \ge D_v$, which rearranges algebraically to $T(t_{\text{target}}) \ge D_v + O(t_{\text{target}})$.
Because the cumulative offset never decreases, we have $O(t_{\text{target}}) \ge O(t_{\text{now}})$.
Therefore, we are guaranteed that $T(t_{\text{target}}) \ge D_v + O(t_{\text{now}}) = D_{\text{real}}$.
This inequality is essential: it proves that the initially armed kernel deadline $D_{\text{real}}$ will inevitably expire at or strictly prior to any physical time where the virtual deadline is satisfied.
If no intervening pauses occur, the offset remains unchanged and the kernel wakes the thread exactly when the valid condition $T_v(t) \ge D_v$ is met.
However, if an intervening pause occurs, the offset increases, causing $D_{\text{real}}$ to expire prematurely while the actual virtual time $T_v(t)$ remains strictly less than $D_v$.
The shim layer automatically traps this premature wakeup, recalculates a pushed-back $D_{\text{real}}$ using the updated offset, and re-arms the sleep.
By repeatedly filtering out these premature wakeups through persistent recalculation, the shim layer guarantees that the application ultimately unblocks precisely at the earliest observable physical instant satisfying the virtual deadline.

For a relative \texttt{sleep}$(D)$, the implementation uniformly maps the request to an absolute wait using \texttt{sleep\_until}$(D_v)$, where the deadline is initialized to $D_v = T_v(t_0) + D$.
By the preceding logic, the unblocking instant $t^\star$ guarantees $T_v(t^\star) \ge D_v$.
Substituting $D_v$, we conclude that $T_v(t^\star) - T_v(t_0) \ge D$.
Thus, relative sleep durations are preserved across arbitrary debugger pauses.
\end{proof}

\parab{Guarantees and Scope.} 
The proof of these two invariants substantiates the mathematical soundness of the pause-erased time semantics.
Because arbitrarily complex timeout and rate-limiting schemes ultimately decompose into these elemental time-reads and blocking sleeps, the shim layer averts any artificial failure cascades induced by debugger pauses.
System decisions evaluated against PET, such as lease expiration and election deadlines, remain observationally equivalent to an unpaused execution state.
This temporal equivalence holds completely up to a minimal under-measurement slack, formally defined in~\secref{app:under-measurement} and empirically verified in~\secref{sec:time-jump}.
The PET abstraction specifically governs local time reads and localized timeout evaluations, making no attempts to artificially synchronize or warp clocks across divergent system processes.

\subsection{PET Under-measurement Slack}
\label{app:under-measurement}

The Pause-Erased Time mechanism cannot entirely eliminate the physical duration of execution pauses due to unavoidable implementation latencies across the suspension and resumption pathways.
Recall from the definitions above that for any debugger-induced pause $G_k = [p_k, r_k)$, the true physical time elapsed during the suspension is exactly the kernel clock difference $\Delta^{\text{true}}_k = T(r_k) - T(p_k)$.
To ensure the system never over-estimates the pause duration---which would dangerously cause virtual time to flow backwards---the measurement is taken strictly from within the interval boundaries.
Specifically, the internal timestamps $t_{\text{stop}}^{(k)}$ and $t_{\text{resume}}^{(k)}$ are captured exclusively after all debuggee threads are safely paused and exclusively before they are allowed to resume user-space execution.

Because coordinating thread suspension across a process involves asynchronous signals and OS scheduler delays, there exists a strictly non-negative latency at both boundaries.
The measured pause duration $\Delta_k = t_{\text{resume}}^{(k)} - t_{\text{stop}}^{(k)}$ is therefore strictly less than or equal to the true pause duration.

\begin{remark}[Under-measurement slack]
Let $\eta_k$ denote the under-measurement slack for a pause interval $G_k$, defined mathematically as the unmeasured residual duration:
\[
\eta_k \;\stackrel{\mathrm{def}}{=}\; \bigl(T(r_k) - T(p_k)\bigr) \;-\; \Delta_k \;\ge\; 0.
\]
Let $T^{\text{ideal}}_v(t)$ formally represent a hypothetically perfect virtual time that completely erases the true physical duration of every pause without any latency:
\[
T^{\text{ideal}}_v(t) \;\stackrel{\mathrm{def}}{=}\; T(t) \;-\; \sum_{k:\,r_k\le t}\bigl(T(r_k) - T(p_k)\bigr).
\]
For any physical time $t$, the actual virtual time $T_v(t)$ governed by PET algebraically relates to the theoretically ideal virtual time as follows:
\[
T_v(t) \;=\; T^{\text{ideal}}_v(t) \;+\; \sum_{k:\,r_k\le t} \eta_k.
\]
This mathematically demonstrates that the actual virtual time $T_v$ perpetually remains (weakly) ahead of the ideal debug time by exactly the global cumulative slack $\sum\eta_k$.
Furthermore, if the OS-level pause and resume measurement latencies are empirically bounded by maximum delays $\sigma_{\text{pause}}$ and $\sigma_{\text{resume}}$, the per-pause slack is strictly bounded by $\eta_k \le \sigma_{\text{pause}} + \sigma_{\text{resume}}$.
\end{remark}

\subsection{Wall Clock as the Clock-Source}
\label{app:wall-clock}

For APIs that use the \texttt{REALTIME} (wall clock) as the clock-source, Invariant 1 (Monotonicity) inherently does not apply, as NTP synchronization can abruptly step the physical wall clock backwards or forwards.
Crucially, this is a fundamental property of the OS wall clock, and handling such organic time jumps is strictly an application-level responsibility.
PET's formal correctness therefore remains intact: Invariant 2 (Timer Correctness) holds.
The shim layer guarantees that the virtualized wall clock yields an observationally equivalent unpaused execution, preserving the natural slews and steps of the physical wall clock while omitting the debugger pauses.

We reuse the established mathematical notation from the preceding proofs: continuous absolute physical time $t$, the measured cumulative pause offset $O(t) = \sum_{r_k\le t}\Delta_k$, and the debugger pause intervals $G_k = [p_k, r_k)$.
Let $R(t)$ denote the value of the kernel's wall clock at any given physical time $t$.
We define the virtualized wall clock value returned to the application as:
\[
R_v(t) \;\stackrel{\text{def}}{=}\; R(t) \;-\; O(t).
\]

Here, $R_v(t)$ acts as the true wall clock $R(t)$ observed on a pristine timeline where all recorded debugger pauses are seamlessly subtracted.
For any two physical time reads $t_1 < t_2$ (occurring outside of an execution pause), we observe:
\[
\begin{aligned}
R_v(t_2) - R_v(t_1) 
    &= \bigl[R(t_2) - R(t_1)\bigr] \;-\; \bigl[O(t_2) - O(t_1)\bigr] \\
    &= \bigl[R(t_2) - R(t_1)\bigr] \;-\; \sum_{t_1 < r_k \le t_2} \Delta_k.
\end{aligned}
\]
This proves that $R_v$ preserves the exact slews and discrete jumps of the underlying physical wall clock $R(t)$, omitting only the measured duration of any debugger-induced pause intervals.

\parab{REALTIME API Semantics.}
The shim layer maps the wall clock time APIs as follows:
\begin{inparaenum}[(i)]
    \item \texttt{get\_time\_realtime} directly returns $R_v(t)$.
    \item \texttt{sleep\_until\_realtime}$(D_v)$ safely blocks the thread by arming a \texttt{REALTIME} absolute physical deadline $D_{\text{real}}^{\mathrm{RT}} = D_v + O(t_{\text{now}})$.
    Whenever the kernel wakes the thread prematurely---due to the offset $O$ increasing from a pause, a spurious signal, or an NTP-induced backward clock step---the shim layer recalculates $D_{\text{real}}^{\mathrm{RT}}$ and re-arms the sleep until the virtual condition $R_v(t) \ge D_v$ properly holds.
    \item \texttt{sleep\_realtime}$(D)$ reduces to \texttt{sleep\_until\_realtime}$(R_v(t_{\text{now}}) + D)$.
\end{inparaenum}

\parae{Invariant 2: Timer Correctness for REALTIME.}
Time-blocking operations dependent on the wall clock unblock strictly when their virtual wall time constraints are satisfied.
Specifically, a call to \texttt{sleep\_until\_realtime}$(D_v)$ returns at an unblocking physical instant $t^\star$ where $R_v(t^\star) \ge D_v$.
Consequently, a relative \texttt{sleep\_realtime}$(D)$ issued at physical time $t_0$ returns at an instant $t^\star$ guaranteeing $R_v(t^\star) - R_v(t_0) \ge D$.
\looseness=-1

\begin{proof}
By construction, the shim layer yields control back to the application only when it verifies $R_v(t^\star) \ge D_v$, which occurs strictly when the underlying kernel reports $R(t^\star) \ge D_{\text{real}}^{\mathrm{RT}}$ using the most recently calculated deadline $D_{\text{real}}^{\mathrm{RT}} = D_v + O(t^\star)$.
If the physical wall clock $R(t)$ steps backwards due to NTP synchronization, the virtual wall clock $R_v(t)$ accurately reflects this backward jump.
Because the shim intercepts every kernel wakeup and re-evaluates the inequality $R_v(t) \ge D_v$, backward steps of $R(t)$ prolong the suspension exactly as they would in an un-debugged execution.
The re-arming loop masks both debugger pauses (which abruptly increase $O(t)$) and spurious interrupts, guaranteeing the earliest legitimate release time.
Similarly, because \texttt{sleep\_realtime}$(D)$ expands to an absolute deadline $D_v = R_v(t_0) + D$, the release at $t^\star$ enforces $R_v(t^\star) \ge R_v(t_0) + D$, preserving relative wall clock durations despite temporal discontinuities.
\looseness=-1
\end{proof}

\subsection{vDSO Handling}
\label{app:vdso}

Linux's vDSO (virtual Dynamic Shared Object) is a kernel-provided shared library mapped directly into a process's user-space memory, exposing critical kernel services without the overhead of a system call context switch.
Most modern \texttt{libc} implementations optimize time-reading APIs (\eg \texttt{gettimeofday} and \texttt{clock\_gettime}) by internally routing them through the vDSO.

Our implementation supports these vDSO optimizations without undermining their performance benefits.
The shim layer interposes itself strictly between the user application and the underlying \texttt{libc}.
When an application invokes a time API, the symbol is intercepted by the shim.
The shim then queries the actual \texttt{libc} function---which uses the vDSO---to retrieve the true physical time, and applies the offset adjustments.
Consequently, our design inherits the zero-syscall performance of vDSO while enforcing the Pause-Erased Time abstraction.
\looseness=-1

We acknowledge a structural limitation: if an application bypasses \texttt{libc} to invoke function pointers directly from the vDSO memory mapping (or uses raw \texttt{syscall} assembly), our \texttt{LD\_PRELOAD} shim cannot intercept the observation.
Such raw-level resolution is rare in application code, but it does break the shim's enforcement scope.
However, because PET's semantics are defined by the invariants above rather than by the interception mechanism, the same logic can be moved into deeper interception frameworks (\eg eBPF or hypervisor trapping) if required.

\section{Distributed Backtrace}

\subsection{Distributed Stack Stitching}
\label{app:dss}

\begin{algorithm}[t]
\small % Added to reduce the font size of the algorithm content
\SetKwInOut{Input}{input}
\SetKwInOut{Output}{output}
\Input{thread\_id}
\Output{stitched call stack across processes}

\BlankLine
\SetKwFunction{FCross}{DistributedBacktrace}
\FCross{thread\_id}\\
\While{true}{
    stacks $\gets$ FetchStack(thread\_id)\;
    meta $\gets$ ScanMetadata(stack)\;
    
    \eIf{meta is invalid}{
        \Return{StitchAllStacks(stacks)}\;
    }{
        caller $\gets$ GetCaller(meta["caller\_id"])\; 
        RestoreStack(caller, meta["caller\_context"])\;
    }
    
    thread\_id $\gets$ caller.thread\_id\;
}
\caption{Distributed stack stitching.}
\label{list:cross-machine-backtrace}
\end{algorithm}
As described in~\algoref{list:cross-machine-backtrace}, \sysname reconstructs the global call graph by iteratively tracing the causal chain backward across network boundaries.
Specifically, \sysname:
\begin{inparaenum}[(1)]
    \item fetches the local call stack of the targeted thread and appends it to the stitched array;
    \item unwinds the stack to locate the injected RPC frame and parses the embedded causal metadata;
    \item terminates the local trace if no valid metadata is found, concluding that the root of the call chain has been reached;
    \item utilizes the extracted metadata to identify the upstream caller process (which may reside on a remote node) and restores the caller's exact register-level execution context;
    \item transitions the backtrace focus to the restored upstream thread and recursively repeats the entire trace process until the origin is fully resolved.
\end{inparaenum}

\subsection{Stack Stitching Latency at Scale} 
\label{sec:dbt-latency}
Because the distributed backtrace algorithm recursively sweeps the call stack backwards across RPC boundaries, its time complexity is inherently linear with respect to the depth of the distributed call chain.
Consequently, resolving full traces for deeply nested call chains incurs proportional end-to-end cross-RPC latency.
A viable optimization is to propagate $N$ consecutive ancestor metadata records within each RPC packet, rather than just the immediate parent.
\sysname can then dispatch concurrent stitching queries to multiple ancestor nodes in the chain, masking the sequential round-trip delays.
This optimization trades off against a modest increase in the RPC payload size during routine application execution.
\looseness=-1

\section{Implementation Details}
\label{app:impl-nits}

\parab{Framework Integration.} Integrating \sysname into existing distributed frameworks requires minimal engineering effort. For high-level orchestrators like Nu, Quicksand, and ServiceWeaver, \sysname's instrumentation is embedded entirely within the framework's underlying runtime, achieving complete transparency for end-user application code. For standard RPC libraries like gRPC, minor boilerplate adjustments during service initialization are sufficient to propagate discovery metadata.

\parab{Metadata Extraction Barrier.} During an RPC invocation, \sysname injects causal metadata into a local stack variable to mark the network boundary. Because this variable is read only asynchronously, by an external debugger process at runtime, optimizing compilers treat the assignment as dead code and elide it during compilation. To circumvent this, we insert a \texttt{volatile} inline assembly directive as an optimization barrier, forcing the compiler to materialize and preserve the metadata on the physical stack.

\parab{Wait-for-Attach.} To permit deterministic debugging of early startup routines, \sysname provides a wait-for-attach mechanism. When enabled via the \texttt{DConnector} module, the application explicitly blocks its initialization path and waits for an external attachment signal. We repurpose the POSIX real-time signal \texttt{SIGRTMIN+6} (\texttt{SIG40}) as a dedicated wakeup trap. Once the debugging session is established, the central control plane instructs the local \texttt{DKnot} daemon to inject \texttt{SIG40} into the debuggee, resuming its execution.

\parab{Multi-ABI Stack Restoration.} Properly stitching causal call stacks dynamically requires rewriting explicit thread hardware registers, strictly governed by the target's Application Binary Interface (ABI). For Go on \texttt{x86\_64}, overwriting the stack pointer (\texttt{RSP}) and instruction pointer (\texttt{RIP}) is natively sufficient. In contrast, for C++, the C ABI mandates additionally restoring the frame base pointer (\texttt{RBP}). Architectural boundaries present similar strict register requirements; ARM64 (\texttt{aarch64}) deployments intrinsically require precise restoration of the Link Register (\texttt{lr}) alongside \texttt{sp} and \texttt{pc} to guarantee that subsequent unwinding operations successfully chain the call stack.

\parab{Auxiliary Thread for Expression Evaluation.} To inspect complex application states, developers routinely evaluate in-guest functions. Dynamically executing arbitrary application logic inherently necessitates harnessing a live thread context. To prevent side effects or accidental state corruption in the primary application execution, we spawn an isolated, no-op auxiliary thread within each debuggee process dedicated to expression evaluation. During interactive evaluations, \texttt{DKnot} resumes only this auxiliary thread while all primary application threads remain frozen.

\parab{Debugger with Kubernetes.} Modern production container images are typically minimized (distroless) to reduce attack surfaces, making it impractical to bundle debugger binaries alongside microservices. For our ServiceWeaver evaluation over Kubernetes, \sysname injects its debugger agents into targeted, running pods using Kubernetes Ephemeral Containers~\cite{EphemeralContainers}. Because Kubernetes overlay networks are isolated, we deploy a gateway container that exposes a single multiplexed port to the \sysname control plane, bridging remote SSH tunnels into the ephemeral debugging containers attached to the target microservices.

% \section{Extended Timeout Threshold Case Study}
% \label{app:extended-case-study}
% \input{tables/case_study_appendix}

% In this section, we include an extended table (\tabref{tab:survey_time_extended}) and discussions for the case study of timeout threshold in real-world distributed applications and frameworks.

% \parab{Nu.}
% Nu is a distributed programming framework that provides transparent live migration for distributed applications. Built on top of the Caladan runtime, Nu enables proclets (process-lets) to migrate seamlessly across machines while
% maintaining microsecond-level latency. 

% The proclet sort interval determines how frequently Nu recalculates proclet priorities based on memory and CPU pressure---delays here can lead to suboptimal migration decisions and load imbalance.

% The full shard probing interval manages Nu's distributed memory pool. Without periodic probing, memory shards marked as full would never be rechecked, leading to memory fragmentation and allocation failures.

% At the application layer, Social Network services built with Nu developed  use 10-second timeouts for both connection pool acquisition and keepalives. If a client cannot obtain a connection from the pool within this timeout, the request fails and an error is returned, directly impacting user experience. Keepalive timeouts handles idle connections, which can disrupt service mesh communication patterns and introduce the overhead of re-establishing connections.

\section{Extended Measurement of Command Handling Latencies}
\label{app:ext-cmd-handling}

\begin{table}[t]
\centering
\footnotesize

% --- Table 1 (38 processes) ---
\resizebox{\columnwidth}{!}{%
\begin{tabular}{@{}rrrrrr@{}}
\toprule
\multirow{2}{*}{\textbf{Command}} & \multicolumn{5}{c}{\textbf{Latency at the Scale of 38 Processes (ms)}} \\ \cmidrule(l){2-6}
 & \textbf{Count} & \textbf{Mean} & \textbf{Median} & \textbf{P95} & \textbf{P99} \\ \midrule
dbt              & 2,348 & 32.5 & 14.2 & 128.5 & 153.9 \\
continue         & 37    & 2.2  & 2.0  & 3.0   & 4.2   \\
break-delete     & 1     & 1.8  & 1.8  & 1.8   & 1.8   \\
break-insert     & 1     & 64.4 & 64.4 & 64.4  & 64.4  \\
exec-interrupt   & 4     & 8.1  & 7.3  & 7.8   & 7.8   \\ \midrule
stack-list-variables & 66 & 3.1  & 2.6  & 4.6   & 12.8  \\
var-create       & 146   & 24.4 & 7.4  & 48.5  & 51.2  \\
var-list-children& 3     & 3.0  & 3.0  & 3.0   & 3.0   \\
var-update       & 11    & 19.2 & 4.6  & 47.6  & 47.6  \\ \bottomrule
\end{tabular}%
}
\caption{Measured latencies of commands at the scale of 38 processes.}
\label{table:full-cmd-lat-38}

\vspace{0.8em} % spacing between tables

% --- Table 2 (62 processes) ---
\resizebox{\columnwidth}{!}{%
\begin{tabular}{@{}rrrrrr@{}}
\toprule
\multirow{2}{*}{\textbf{Command}} & \multicolumn{5}{c}{\textbf{Latency at the Scale of 62 Processes (ms)}} \\ \cmidrule(l){2-6}
 & \textbf{Count} & \textbf{Mean} & \textbf{Median} & \textbf{P95} & \textbf{P99} \\ \midrule
dbt              & 4,303 & 32.1 & 13.9 & 144.5 & 156.3 \\
continue         & 8     & 2.8  & 2.7  & 3.6   & 3.6   \\
break-delete     & 1     & 6.9  & 6.9  & 6.9   & 6.9   \\
break-insert     & 2     & 46.9 & 46.9 & 8.5   & 8.5   \\
exec-interrupt   & 4     & 15.4 & 16.1 & 17.3  & 17.3  \\ \midrule
stack-list-variables & 28 & 6.1  & 3.8  & 12.9  & 30.3  \\
var-create       & 78    & 24.6 & 16.3 & 49.6  & 50.1  \\
var-list-children& 3     & 5.6  & 3.8  & 3.8   & 3.8   \\ \bottomrule
\end{tabular}%
}
\caption{Measured latencies of commands at the scale of 62 processes.}
\label{table:full-cmd-lat-62}

\vspace{0.8em}

% --- Table 3 (122 processes) ---
\resizebox{\columnwidth}{!}{%
\begin{tabular}{@{}rrrrrr@{}}
\toprule
\multirow{2}{*}{\textbf{Command}} & \multicolumn{5}{c}{\textbf{Latency at the Scale of 122 Processes (ms)}} \\ \cmidrule(l){2-6}
 & \textbf{Count} & \textbf{Mean} & \textbf{Median} & \textbf{P95} & \textbf{P99} \\ \midrule
dbt              & 14,573 & 26.5 & 13.3 & 110.1 & 160.1 \\
continue         & 7      & 4.7  & 4.8  & 4.9   & 4.9   \\
break-insert     & 1      & 81.1 & 81.1 & 81.1  & 81.1  \\
exec-interrupt   & 6      & 26.1 & 25.6 & 29.0  & 29.0  \\ \midrule
stack-list-variables & 19 & 7.6  & 4.4  & 15.7  & 15.7  \\
var-create       & 43     & 24.4 & 18.8 & 49.8  & 69.9  \\ \bottomrule
\end{tabular}%
}
\caption{Measured latencies of commands at the scale of 122 processes.}
\label{table:full-cmd-lat-122}

\end{table}

To complement the primary evaluation, we present the full measurements of \sysname's command handling latencies across varying deployment scales (ServiceWeaver's \texttt{socialnet} app).
Specifically, \tabref{table:full-cmd-lat-38}, \tabref{table:full-cmd-lat-62}, and \tabref{table:full-cmd-lat-122} detail the granular end-to-end response times measured under settings of 38, 62, and 122 concurrent microservice instances.

As illustrated in the results, state-inspection commands are automatically emitted in bulk by the IDE frontend (\eg VS Code) to dynamically instantiate and fetch local variable scopes for UI rendering.
Examples include \texttt{stack-list-variables}, \texttt{var-create}, \texttt{var-list-children}, and \texttt{var-update}.
Conversely, the \texttt{exec-interrupt} command operates as an asynchronous control signal, issued exclusively when a developer manually asserts a global pause sequence via the debugger control panel.

\section{Migration Continuity (Nu and Quicksand)}
\label{app:heap_restore}

\sysname's design guarantees extensibility to dynamic microsecond-scale environments like Nu~\cite{ruanNuAchievingMicrosecondScale2023} and Quicksand~\cite{quicksand-nsdi25}. 
Specifically, \sysname preserves two forms of debugging continuity:
\begin{inparaenum}[(i)]
    \item when a \emph{computation migrates}, all established breakpoint intents automatically follow the execution context; and
    \item when \emph{heap state migrates}, the distributed backtrace reconstructs the global call graph with all caller-visible local variables intact.
\end{inparaenum}

\parab{Architectural Assumptions.}
To orchestrate heap continuity, \sysname leverages two core architectural properties natively provided by these advanced execution frameworks:
\begin{inparaenum}[(i)]
  \item \textit{Symmetric Address-Space Layouts.} The underlying frameworks deploy an identical, universally linked binary across all cluster nodes with Address Space Layout Randomization (ASLR) strictly disabled. 
  Consequently, virtual memory addresses and object layouts are guaranteed identical cluster-wide. 
  This implies that \sysname's inspection-time restoration can safely map distributed objects directly to the same virtual addresses without requiring complex pointer swizzling or relocation maps.
  \item \textit{Logical Heap Locator API.} The frameworks expose a queryable API that, given a stable logical context identifier, returns the specific remote host and process currently owning a migrating heap region. \sysname invokes this locator exclusively on-demand during a distributed backtrace.
\end{inparaenum}

\parab{Breakpoints Follow Computations.}
Environments like Nu and Quicksand enforce that any specific logical computation unit is resident on exactly one host at any given time.
\sysname decouples breakpoints from physical memory addresses, instead expressing them as logical \emph{intents} broadcast to all attached \texttt{DKnot} daemons.
As the orchestrator spawns new processes to handle migrating workloads, \sysname's service discovery captures them so they inherit the global breakpoint configurations.
Consequently, whichever physical process currently hosts the migrating computation reports the hit, and \sysname focuses the debugger on that entrypoint.
\looseness=-1

\parab{Reconstructing Stack Frames Post-Migration.}
Because Nu and Quicksand pin the originating sender thread on its native process while its outbound RPC remains outstanding, the stack frames remain sound even if surrounding threads or heaps relocate.
However, to reveal on-heap objects reached from a caller frame whose heap has already migrated, \sysname executes a sequence of \emph{on-demand heap restorations} during each traceback hop:

\begin{enumerate}[leftmargin=1.2em,itemsep=0.2em,topsep=0.2em]
  \item \textbf{Presence Verification.} For every reconstructed caller frame, \sysname queries the local runtime to verify whether the caller's target heap remains physically present in the local address space.
  \item \textbf{Surgical Heap Restoration.} If the heap is absent, \sysname queries the framework's locator API to pinpoint the current physical owner. It then coordinates a remote dump of the latest synchronized heap state and writes these bytes into the reconstructed caller's address space at the same virtual addresses, rendering all pointers and local variables valid for live IDE inspection.
  \item \textbf{Ephemeral Cleanup.} Upon intercepting a \texttt{continue} command, \sysname removes any temporary restorations, reverting process memory to its original state before execution resumes.
\end{enumerate}

\end{document}